\newenvironment{rcases}
  {\left.\begin{aligned}}
  {\end{aligned}\right\rbrace}
\let\Pr\undefined 
\DeclareMathOperator{\Pr}{{\mathbf P}}        
\DeclareMathOperator*{\argmax}{arg\,max}
\DeclareMathOperator*{\argmin}{arg\,min}
\newtheorem{theorem}{Theorem}[section]
\newtheorem{lemma}[theorem]{Lemma}
\newtheorem{definition}{Definition}[section]
\newtheorem{example}{Example}
\newtheorem{remark}{Remark}[section]
\newenvironment{proof}[1][Proof]{\begin{trivlist}
\item[\hskip \labelsep {\bfseries #1}]}{\end{trivlist}}
\newcommand{\cupdot}{\mathbin{\mathaccent\cdot\cup}}
\title{\LARGE \bf
Distributed Learning in the Presence of Disturbances}
\author{Chithrupa Ramesh, Marius Schmitt and John Lygeros
\thanks{This work was supported by the EU project SPEEDD (FP7-ICT 619435).}%
\thanks{C. Ramesh, M. Schmitt and J. Lygeros are with the Automatic Control Lab,
        Electrical Engineering, ETH, Zurich,
        Switzerland. {\tt\small \{rameshc,schmittm,lygeros\}@control.ee.ethz.ch}}%
}
\begin{document}

\maketitle \thispagestyle{empty} \pagestyle{empty}

\begin{abstract}
We consider a problem where multiple agents must learn an action profile that maximises the sum of their utilities in a distributed manner. The agents are assumed to have no knowledge of either the utility functions or the actions and payoffs of other agents. These assumptions arise when modelling the interactions in a complex system and communicating between various components of the system are both difficult. In~\cite{Marden2014}, a distributed algorithm was proposed, which learnt Pareto-efficient solutions in this problem setting. However, the approach assumes that all agents can choose their actions, which precludes disturbances. In this paper, we show that a modified version of this distributed learning algorithm can learn Pareto-efficient solutions, even in the presence of disturbances from a finite set. We apply our approach to the problem of ramp coordination in traffic control for different demand profiles.
\end{abstract}

\section{Introduction} \label{Sec:Intro}

In complex systems, modelling the interactions between various components and their relationship to the system performance is not an easy task. This poses a challenge while designing controllers for such systems, as most design methods require a model of the system. Even when considerable effort has been expended in identifying suitable models for such systems, utilising these models to design online controllers is not always easy. This is because collecting measurements of a complex system, computing control signals using complex algorithms and applying these controls to actuators across the system is communication intensive and computationally demanding. The resulting delays are not well suited to the control of real-time complex systems.

An example is the real-time control of freeway traffic, where often traffic models are highly nonlinear and methods to design controllers using these models do not scale well~\cite{Allsop2008}. Furthermore, to use these models, the traffic flow from every segment of the freeway must be measured and collected, and the control signals must be delivered to the ramps on the freeway. To reduce the communication and computation burden, distributed controllers that act on mostly local information are required.

One approach is to use a distributed randomised algorithm to explore the policy space and learn the optimal actions. Recently, a distributed learning algorithm has been proposed in~\cite{Marden2014} where agents learn action profiles that maximise the system welfare. This algorithm is payoff-based, and the agents require no prior knowledge of either the utility functions or the actions and payoffs of other agents. An implicit assumption in this approach is that every agent that influences the utility can choose its actions. In reality, there might always be disturbances which cannot be chosen in a desired manner. In this paper, we extend this approach to include the effects of disturbances.

Our main contribution is a modification to the algorithm in~\cite{Marden2014} to deal with disturbances. We show that agents learn Pareto-efficient solutions in a distributed manner using our algorithm, even in the presence of disturbances from a finite set. We verify the theoretical results on a small example. In this case, all assumptions can be verified and strong convergence guarantees can be given. To demonstrate versatility of the approach, we also apply the results to a realistic coordination problem motivated by freeway traffic control. We use our newly developed algorithm to learn a high-level coordination strategy for a ramp metering problem with promising results, using simulation parameters and traffic demand data from a real-world use case.

The learning rule used in this paper is related to the trial and error learning procedure from~\cite{Young2009} and its cognates~\cite{Pradelski2012,Marden2014}. These papers proposed algorithms that learnt Nash equilibria~\cite{Young2009}, Pareto efficient equilibria~\cite{Pradelski2012} and Pareto-efficient action profiles~\cite{Marden2014}, respectively. Convergence guarantees for the latter were presented in~\cite{Menon2013a}. Restrictions on the payoff structure, which are required for the result in~\cite{Marden2014} to hold, were eliminated through the use of explicit communication in~\cite{Menon2013b}. We also draw on the analysis of deliberate experimentation using the theory of regular perturbed Markov processes from~\cite{Young1993}.

This paper is organised as follows: We describe the algorithm in Section~\ref{Sec:ProbForm}, and present known results in Section~\ref{Sec:Preliminaries}. Our main result is presented in Section~\ref{Sec:MainResults} and illustrated on a few examples in Section~\ref{Sec:Examples}. The conclusion is in Section~\ref{Sec:Conclusions}.

\section{Problem Formulation} \label{Sec:ProbForm}

We consider a set of agents $N := \{1,\dots,n\}$, each with a finite action set $\mathcal{A}_i$ for $i \in N$. The disturbance is modelled as an independent and identically distributed (i.i.d.) process $w_{k}$, which takes values from a finite set $\mathbb{W}$ according to a probability distribution $\Pr_{w}$ that is fully supported on $\mathbb{W}$. Given an action profile $a \in \mathcal{A}$, where $\mathcal{A} := \mathcal{A}_1 \times \dots \times \mathcal{A}_n$, and a disturbance $w \in \mathbb{W}$, the payoff for each agent is $u_i(a,w)$. The payoffs are generated by utility functions $\mathcal{U}_i: \mathcal{A} \times \mathbb{W} \rightarrow [0,1)$ whose functional forms are unknown to the agents. The welfare of the network of agents is $\mathcal{W}(a,w) = \sum_{i \in N} \mathcal{U}_i(a,w)$.

The agents play a repeated game; in the $k^{\textrm{th}}$ iteration, each agent chooses its action $a_{i,k}$ with probability $p_{i,k} \in \Delta(\mathcal{A}_i)$, where $\Delta(\mathcal{A}_i)$ is the simplex of distributions over $\mathcal{A}_i$. The strategy $p_{i,k}$ is completely uncoupled or pay-off based, i.e., $p_{i,k} = \psi_i(\{a_{i,\tau},u_{i,\tau}(a_{\tau},w_{\tau})\}_{\tau=0}^{k-1})$. In other words, an agent does not know the actions or payoffs of any other agent in the network.

Each agent maintains an internal state $z_{i,k} := [\bar{a}_{i,k}, \bar{u}_{i,k}, m_{i,k}]$ in the $k^{\rm th}$ iteration, where $\bar{a}_{i,k} \in \mathcal{A}_i$ is the baseline action, $\bar{u}_{i,k}$ is the corresponding baseline utility that lies in the range of $\mathcal{U}_i$ and $m_{i,k} \in \{\mathcal{C},\mathcal{D}\}$ is the mood variable that connotes whether the agent is content or discontent. The state $z_k := \{z_{1,k},\dots,z_{n,k}\}$ lies in the finite state space $\mathbb{Z}$.

The algorithm is initialised with all agents setting their moods to discontent, i.e., $m_{i,0} = \mathcal{D}$ for $i \in N$. An experimentation rate $0 < \varepsilon < 1$ is fixed and a constant $c > n$ is selected. Then,
each agent selects an action $a_{i,k}$ according to its mood and the corresponding probabilistic rule:
\begin{equation} \label{Eq:AgentDynamics}
\begin{aligned}
&m_{i,k} = \mathcal{C}: \; p_i(a_{i,k}) &= \begin{cases}
\frac{\varepsilon^c}{|\mathcal{A}_i|-1} & a_{i,k} \neq \bar{a}_{i,k} \\
1-\varepsilon^c & a_{i,k} = \bar{a}_{i,k}
\end{cases} \\
&m_{i,k} = \mathcal{D}: \; p_i(a_{i,k}) &= \frac{1}{|\mathcal{A}_i|} \quad \forall \quad a_{i,k} \in \mathcal{A}_i
\end{aligned}
\end{equation}
The agents choose their strategies based on their moods. A content agent selects its baseline action with high probability and experiments by choosing other actions with low probability. A discontent agent selects an action with uniform probability.

Each agent plays the action it has selected and receives a payoff $u_{i,k}(a_k,w_k)$, which it uses to update its state as
\begin{align}
&z_{i,k+1} = \begin{cases}
z_{i,k} & \begin{aligned}
m_{i,k} &= \mathcal{C}, a_{i,k} = \bar{a}_{i,k}, \\
|u_{i,k} &- \bar{u}_{i,k}| \le \rho
\end{aligned} \\
\begin{rcases}
z_{\mathcal{C}} & \; \text{w.p.} \; p_{\mathcal{C}} \\
z_{\mathcal{D}} & \; \text{w.p.} \; 1-p_{\mathcal{C}}
\end{rcases}
& \rm{otherwise}
\end{cases} \label{Eq:StateDynamics}
\end{align}
where $z_{\mathcal{C}} = \left[a_{i,k}, u_{i,k}, \mathcal{C} \right]$, $z_{\mathcal{D}} = \left[a_{i,k}, u_{i,k}, \mathcal{D} \right]$, $p_{\mathcal{C}} = \varepsilon^{1-u_{i,k}}$ and $\rho$ is the maximum deviation in the payoffs due to the disturbance process $w$, as defined in~\eqref{Eq:RhoDef}. The state update also depends on the mood of the agent. A content agent that chose to play its baseline action and received a payoff within the interval $u_{i,k} \in [\bar{u}_{i,k}-\rho,\bar{u}_{i,k}+\rho]$ retains its state. In all other cases, the state is updated to the played action and received payoff, and the mood is set to content or discontent with high probabilities for high or low payoffs, respectively. Thus, a content agent must receive a payoff outside an interval $\pm \rho$ of its baseline payoff to reevaluate its mood or change its state. This interval rule renders an agent insensitive to small changes in the payoff.

The variable $\rho$ is defined as the maximum deviation in the payoffs received by any agent $i \in N$ for every action profile $a \in \mathcal{A}$ and every pair of disturbances $w_1, w_2 \in \mathbb{W}$, i.e.,
\begin{equation} \label{Eq:RhoDef}
\begin{aligned}
\rho := \argmin_{r \in \mathbb{R}} &\{ |u_i(a,w_1) - u_i(a,w_2)| \le r \}
&\forall \; i \in N , \; \forall \; a \in \mathcal{A} \; \textrm{ and } \; \forall \; w_1, w_2 \in \mathbb{W} \; .
\end{aligned}
\end{equation}

We are interested in identifying the set of states the above algorithm converges to. A necessary condition for this algorithm to function as desired is the interdependence property, stated below.
\begin{definition} \label{Def:Interdependence}
An $n$-person game is \emph{interdependent} if for every action profile $a \in \mathcal{A}$, every disturbance $w \in \mathbb{W}$ and every proper subset of agents $J \subset N$, there exists an agent $i \subset N \setminus J$, a choice of actions $a'_J \in \prod_{j \in J} \mathcal{A}_j$ and a disturbance $w' \in \mathbb{W}$, such that
\begin{equation} \label{Eq:Interdependence}
\left| \mathcal{U}_i(a'_J,a_{-J},w') - \mathcal{U}_i(a_J,a_{-J},w) \right| > \rho
\end{equation}
\end{definition}
This property ensures that the set of agents cannot be divided into two mutually non-interacting groups, and that a discontent agent always has recourse to actions that influence the utilities of other content agents despite the algorithm's insensitivity to the interval $[\bar{u}_i-\rho,\bar{u}_i+\rho]$ in~\eqref{Eq:StateDynamics}.

\begin{remark}[A remark on the state space $\mathbb{Z}$: ]
The state $z_{k}$ is an aggregation of the states $z_{i,k} := [\bar{a}_{i,k}, \bar{u}_{i,k}, m_{i,k}]$ of each of the agents. Thus, one would expect the cardinality of the state space to be $|\mathbb{Z}| = 2^N |\mathcal{A}| |\mathbb{W}|$, because the payoffs obtained are completely determined by the choice of the actions and the disturbance. However, the interval rule in~\eqref{Eq:StateDynamics} results in more states becoming reachable and $|\mathbb{Z}| \le 2^{N} |\mathcal{A}|^{2} |\mathbb{W}|$. The exact number of states depends on the payoffs. For the proofs presented in this paper, we define the state space in terms of the states reachable from the initial point of our algorithm, i.e., $\mathbb{Z} := \{ z : \exists \tau > 0 \; \text{s.t.} \; \Pr(z_{\tau}=z | z_{0}) > 0 \}$, where $z_{0}$ is any state where all agents are discontent.
\end{remark}

%
\section{Preliminaries} \label{Sec:Preliminaries}

We briefly outline Young's result on regular perturbed Markov processes~\cite{Young1993}. 
Consider the Markov processes on a state space $\mathbb{X}$ with transition matrices $\Pr^{0}$ and $\Pr^\epsilon$, where a finite-valued $\epsilon > 0$ measures the noise level. The Markov chain induced by $\Pr^0$ describes some basic evolutionary process such as best response dynamics, while the chain induced by $\Pr^{\epsilon}$ represents the perturbed process obtained by introducing mistakes or experiments. This notion is formalised as follows.

\begin{definition} \label{Def:RegPerturb}
A family of Markov processes $\Pr^{\epsilon}$ is called a regular perturbation of a Markov chain with transition matrix $\Pr^0$ if it satisfies the following conditions:
\begin{enumerate}[label=\roman*.]
    \item $\Pr^{\epsilon}$ is aperiodic and irreducible for all finite $\epsilon > 0$.
    \item $\lim_{\epsilon \rightarrow 0} \Pr^{\epsilon}_{xy} = \Pr^0_{xy}$, $\forall x,y \in \mathbb{X}$.
    \item If $\Pr^{\epsilon}_{xy} > 0$ for some $\epsilon$, then $\exists$ $r(x,y) \ge 0$, called the \emph{resistance} of the transition $x \rightarrow y$, such that
                    \begin{equation} \label{Eq:Resistance}
                    0 < \lim_{\epsilon \rightarrow 0} \epsilon^{-r(x,y)} \Pr^{\epsilon}_{xy} < \infty
                    \end{equation}
\end{enumerate}
\end{definition}
Property~i ensures that there is a unique stationary distribution for all finite $\epsilon > 0$. Property~ii ensures that the perturbed process converges to the unperturbed process in the limit as $\epsilon \rightarrow 0$. Property~iii states that a transition $x \rightarrow y$ is either impossible under $\Pr^{\epsilon}$ or it occurs with a probability $\Pr^{\epsilon}_{xy}$ of order $\epsilon^{r(x,y)}$ for some unique, real $r(x,y)$ in the limit as $\epsilon \rightarrow 0$. Note that $r(x,y)=0$ if and only if $\Pr^0_{xy} > 0$. Thus, the transitions of resistance zero are the same as the transitions that are feasible under $\Pr^0$. 

\begin{definition} \label{Def:StochStableState}
A state $x \in \mathbb{X}$ is said to be stochastically stable if $\mu^0_x > 0$, where $\mu^0$ is a stationary distribution of $\Pr^0$.
\end{definition}

We are interested in characterizing the limiting distribution $\mu^0$ of $\Pr^{0}$ through its support, or the set of stochastically stable states. To do this, we define two directed graphs. The first graph $G := (\mathbb{X},\mathbb{E}_G)$ has as vertex set the set of states $\mathbb{X}$, and as directed edge set $\mathbb{E}_G := \{ x \rightarrow y \; | \; \Pr^{\epsilon}_{xy} > 0 , \; x,y \in \mathbb{X}\}$. Thus, a directed edge exists in $G$ only if a single transition under $\Pr^{\epsilon}$ gets us from state $x$ to $y$, for all values of $\epsilon \ge 0$. Finally, $r(x,y)$ in~\eqref{Eq:Resistance} defines the weight or resistance of this directed edge in $G$.

To define the second graph, we first enumerate the recurrence classes of $\Pr^0$ as $X_1,\dots,X_L$. Then, we can define the resistance between two classes as the minimum resistance between any two states belonging to these classes, i.e.,
\begin{equation} \label{Eq:ResistanceRecClass}
r_{\ell_1 \ell_2} := \min_{x \in X_{\ell_1}, y \in X_{\ell_2}} r(x,y) , \quad \quad \text{for} \quad \ell_1, \ell_2 \in \{1,\dots,L\} \; .
\end{equation}
Note that there is at least one path from every class to every other because $\Pr^{\epsilon}$ is irreducible. We now define the second graph as $\mathcal{G} := (\{1,\dots,L\}, \mathbb{E}_{\mathcal{G}})$. This graph has as vertex set the set of indices of the recurrence classes of $\Pr^0$, and as edge set the set of directed edges between members of the recurrence classes. Also, $r_{\ell_1 \ell_2}$ defines the resistance or weight of this directed edge.

\begin{definition} \label{Def:StochPotential}
Let an $\ell$-tree in $\mathcal{G}$ be a spanning sub-tree of $\mathcal{G}$, such that for every vertex $\ell' \neq \ell$, there exists exactly one directed path from $\ell'$ to $\ell$. Then, the stochastic potential $\gamma_\ell$ of the recurrence class $X_\ell$ is defined as
\begin{equation} \label{Eq:StochPotential}
\gamma_\ell := \min_{T \in \mathcal{T}_\ell} \sum_{(a,b) \in T} r_{a b}
\end{equation}
where $\mathcal{T}_\ell$ is the set of all $\ell$-trees in $\mathcal{G}$.
\end{definition}

We can now state Young's result for perturbed Markov processes~\cite{Young1993}.
\begin{theorem}[Theorem~$4$ from~\cite{Young1993}] \label{Thm:Young93}
Let $\Pr^0$ be a time-homogenous Markov process on the finite state space $\mathbb{X}$ with recurrence classes $X_1, \dots, X_L$. Let $\Pr^{\epsilon}$ be a regular perturbation of $\Pr^0$, and let $\mu^{\epsilon}$ be its unique stationary distribution for every small positive $\epsilon$. Then,
\begin{enumerate}[label=\roman*.]
\item as $\epsilon \rightarrow 0$, $\mu^{\epsilon}$ converges to a stationary distribution $\mu^0$ of $\Pr^{0}$,
\item the recurrence class $X_{\ell^{*}}$, with stochastic potential $\gamma_{\ell^{*}} := \min_{\ell \in \{1,\dots,L\}} \gamma_{\ell}$, contains the stochastically stable states $\{x \in \mathbb{X}: \mu^0_x > 0 \}$.
\end{enumerate}
\end{theorem}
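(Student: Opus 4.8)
The plan is to reduce the limiting behaviour of $\mu^\epsilon$ to a combinatorial computation over spanning in-trees via the Markov chain tree theorem, and then to collapse that computation onto the quotient graph $\mathcal{G}$. First I would record the easy structural facts. By Property~i each $\Pr^\epsilon$ is irreducible and aperiodic on the finite set $\mathbb{X}$, so it admits a unique stationary distribution $\mu^\epsilon$; since the $\mu^\epsilon$ live in the compact probability simplex, some subsequence converges, and Property~ii applied to the fixed-point identity $\mu^\epsilon \Pr^\epsilon = \mu^\epsilon$ forces any limit point $\mu^0$ to satisfy $\mu^0 \Pr^0 = \mu^0$. To upgrade this to convergence of the whole family and to identify the support, I need the explicit order-of-magnitude estimate below.

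The core step is the Markov chain tree theorem (Freidlin--Wentzell): for the irreducible chain $\Pr^\epsilon$,
\[
\mu^\epsilon_x = \frac{\sum_{T \in \mathcal{T}_x} \prod_{(a,b) \in T} \Pr^\epsilon_{ab}}{\sum_{y \in \mathbb{X}} \sum_{T \in \mathcal{T}_y} \prod_{(a,b) \in T} \Pr^\epsilon_{ab}},
\]
where $\mathcal{T}_x$ denotes the spanning trees of $G$ directed toward $x$. By Property~iii each factor is of order $\epsilon^{r(a,b)}$, so each tree weight is of order $\epsilon^{R(T)}$ with $R(T) := \sum_{(a,b)\in T} r(a,b)$, and the numerator for $x$ is governed by $\rho(x) := \min_{T \in \mathcal{T}_x} R(T)$. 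Dividing numerator and denominator by $\epsilon^{\rho^*}$, where $\rho^* := \min_{x} \rho(x)$, shows that $\mu^\epsilon_x$ converges to a finite limit, strictly positive exactly when $\rho(x) = \rho^*$. This settles part~i and reduces part~ii to showing that the state-level potential $\rho(x)$ reproduces the class-level potential $\gamma_\ell$ of Definition~\ref{Def:StochPotential}.

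The final, and I expect hardest, step is this contraction from $G$ to $\mathcal{G}$. The key observation is that every edge strictly inside a recurrence class $X_\ell$ carries zero resistance (since $r=0$ iff $\Pr^0_{xy}>0$, and each $X_\ell$ is $\Pr^0$-irreducible), so a minimum-resistance $x$-tree never pays to move within a class. I would argue that an optimal $x$-tree can be taken to route every state toward a single exit inside its class at no cost and to leave each class $X_\ell \neq X_{\ell(x)}$ through exactly one positive-resistance edge; these crossing edges project to an $\ell(x)$-tree on $\mathcal{G}$. Conversely, any $\ell(x)$-tree on $\mathcal{G}$ lifts to an $x$-tree on $G$ of equal resistance by selecting, for each crossing, the minimizing pair in~\eqref{Eq:ResistanceRecClass} and filling in intra-class edges for free. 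Hence $\rho(x) = \gamma_{\ell(x)}$; in particular $\rho$ is constant on each class, so stochastic stability is a property of classes, and $\rho^* = \min_\ell \gamma_\ell = \gamma_{\ell^*}$. Combined with the estimate above, $\mu^0_x > 0$ iff $x \in X_{\ell^*}$, which is part~ii. The delicate point is the exchange argument behind the first direction: one must show that any within-class redundancy in an optimal tree can be rerouted, and that each class can be re-attached through its cheapest crossing edge without creating a cycle or raising the total cost, which is precisely where the minimum in the definition of $r_{\ell_1\ell_2}$ does the work.
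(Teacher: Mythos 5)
This statement is Theorem~4 of Young (1993); the paper imports it as a preliminary and gives no proof of it, so there is no in-paper argument to compare against and your attempt has to stand on its own. Your first two steps are sound and follow the standard Freidlin--Wentzell/Young route: uniqueness of $\mu^{\epsilon}$ from Property~i, the Markov chain tree theorem, the estimate $\mu^{\epsilon}_x \asymp \epsilon^{\rho(x)-\rho^*}$ (with no cancellation, since every tree weight is positive), and hence convergence of the whole family to a stationary distribution of $\Pr^0$ supported on $\{x : \rho(x)=\rho^*\}$.

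The genuine gap is in the contraction step, and it is concentrated in the states of $\mathbb{X}$ that belong to no recurrence class --- the transient states of $\Pr^0$ --- which your argument never mentions. A spanning in-tree of $G$ rooted at $x$ must contain \emph{every} state, transient ones included, so to get $\rho(x)\le\gamma_{\ell(x)}$ you must show the transient states can be attached at zero cost; this uses the fact that from every transient state there is a zero-resistance path into some recurrence class (under $\Pr^0$ the chain reaches a recurrence class almost surely). More importantly, part~ii asserts that the support of $\mu^0$ is \emph{contained} in $X_{\ell^{*}}$, so you must also rule out $\rho(y)=\rho^*$ for transient $y$; that requires the observation that every edge \emph{exiting} a recurrence class has strictly positive resistance (recurrence classes are closed under $\Pr^0$), so any tree rooted at a transient state can be strictly improved by rerooting inside a recurrence class. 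Two further inaccuracies sit in the same step. First, it is not true that ``every edge strictly inside a recurrence class carries zero resistance'': an edge $x\to y$ with $x,y\in X_\ell$ has $r(x,y)>0$ whenever $\Pr^{\epsilon}_{xy}>0$ but $\Pr^{0}_{xy}=0$; what you actually need, and what irreducibility of $\Pr^0$ restricted to $X_\ell$ gives, is that the zero-resistance edges alone strongly connect $X_\ell$. Second, your projection argument assumes the cheapest way for an optimal $x$-tree to pass from $X_{\ell_1}$ to $X_{\ell_2}$ is a single crossing edge, which is what the single-edge definition in~\eqref{Eq:ResistanceRecClass} encodes; in Young's theorem $r_{\ell_1\ell_2}$ is the least resistance over multi-step \emph{paths} (possibly through transient states), and with the single-edge definition the identity $\rho(x)=\gamma_{\ell(x)}$ can fail, e.g.\ when two classes are connected only, or more cheaply, through a transient state. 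You inherit this from the paper's restatement, but the exchange argument must be carried out with path-based resistances for the two inequalities to meet.
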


\section{Learning Pareto-efficient solutions} \label{Sec:MainResults}

We begin by establishing that Young's result applies to our system, resulting in a distributed algorithm for Pareto-efficient learning in the presence of disturbances. To prove this result, we enumerate the recurrence classes of $\Pr^0$ and the resistances between the classes. We use these values to identify the structure of the tree with minimum stochastic potential.

\subsection{Main Result}

For $\varepsilon = 0$, the transition matrix $\Pr^{0}$ corresponds to an unperturbed Markov process, and we begin by showing that $\Pr^{\varepsilon}$ is a regular perturbation on $\Pr^0$.
\begin{lemma} \label{Lem:RegPert}
The Markov process with transition matrix $\Pr^{\varepsilon}$ is a regular perturbation on $\Pr^0$.
\end{lemma}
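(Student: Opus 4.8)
The plan is to verify, one by one, the three defining properties of a regular perturbation in Definition~\ref{Def:RegPerturb} for the transition matrix $\Pr^{\varepsilon}$ induced by the action rule~\eqref{Eq:AgentDynamics}, the state update~\eqref{Eq:StateDynamics}, and the parameter-free disturbance law $\Pr_{w}$. The observation I would build everything on is that any one-step probability $\Pr^{\varepsilon}_{xy}$ is a finite sum, over the action profiles $a$ and disturbances $w$ that carry $x$ to $y$, of products of elementary factors of three types: action-selection factors ($1-\varepsilon^{c}$ or $\varepsilon^{c}/(|\mathcal{A}_i|-1)$ for content agents, the constant $1/|\mathcal{A}_i|$ for discontent agents), mood-reset factors ($\varepsilon^{1-u_{i,k}}$ or $1-\varepsilon^{1-u_{i,k}}$), and the disturbance weights $\Pr_{w}(w)$. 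Each factor has the form $c\,\varepsilon^{s}\,(1+o(1))$ with $c>0$ and $s\ge 0$, because $\Pr_{w}$ is fully supported on the finite set $\mathbb{W}$ and each exponent $1-u_{i,k}$ is strictly positive as $u_{i,k}\in[0,1)$.

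Property~ii then follows by letting $\varepsilon\to 0$ factor by factor: content agents play their baseline with probability tending to one, discontent agents keep the uniform rule, and since $p_{\mathcal{C}}=\varepsilon^{1-u_{i,k}}\to 0$ every mood reset lands in discontent, which I take to \emph{define} $\Pr^{0}$. For Property~iii I would read off the exponent of each realisation: the pair $(a,w)$ contributes a monomial $C_{(a,w)}\,\varepsilon^{\sigma(a,w)}$ whose exponent $\sigma(a,w)$ is the sum of the individual $s$'s, and I set $r(x,y):=\min_{(a,w)}\sigma(a,w)$ over the realisations consistent with the transition. Since all summands are nonnegative, no cancellation can occur, so $\varepsilon^{-r(x,y)}\Pr^{\varepsilon}_{xy}$ tends to the sum of the coefficients $C_{(a,w)}$ over the minimising realisations, a strictly positive and finite number, which is exactly~\eqref{Eq:Resistance}.

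It remains to establish Property~i. For a self-loop, hence aperiodicity, I would use a fully content state whose baseline action profile was adopted simultaneously under one disturbance, so that $\bar{u}_{i,k}=u_i(\bar{a},w_0)$ for all $i$: when every agent replays its baseline, an event of probability at least $(1-\varepsilon^{c})^{n}$, the definition~\eqref{Eq:RhoDef} of $\rho$ forces $|u_i(\bar{a},w)-\bar{u}_{i,k}|\le\rho$ for every realised disturbance $w$, so the retain branch of~\eqref{Eq:StateDynamics} keeps the state fixed. For irreducibility I would route every ordered pair through the all-discontent configurations. From any $x$, letting each content agent experiment puts it in the reset branch, where it becomes discontent with probability $1-\varepsilon^{1-u_{i,k}}>0$, while discontent agents can be kept discontent; thus an all-discontent state is reached in one step with positive probability, \emph{without} invoking interdependence. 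Conversely, the transition kernel out of an all-discontent state ignores its recorded baselines, so its reachable set coincides with that of $z_{0}$, which by the definition of $\mathbb{Z}$ is all of $\mathbb{Z}$; chaining the two directions connects any $x$ to any $y$, and together with the self-loop this gives Property~i.

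The step I would treat most carefully is Property~iii: a single transition may be realised through several action--disturbance pairs of differing orders in $\varepsilon$, so one must confirm that the lowest-order contributions survive and that the prefactor stays finite. Because every factor is a genuine probability, hence nonnegative, the minimising realisations contribute strictly positive coefficients and no cancellation can spoil the limit in~\eqref{Eq:Resistance}; carrying out this monomial bookkeeping across the enlarged reachable space $\mathbb{Z}$ is where I expect the bulk of the routine work, and the only genuine subtlety, to lie.
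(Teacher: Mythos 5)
Your proof is correct and verifies the same three properties of Definition~\ref{Def:RegPerturb} as the paper, but two sub-arguments differ in a way worth noting. For irreducibility, the paper lets one agent become discontent and then invokes the interdependence property to propagate discontent to the remaining agents before falling back on the definition of $\mathbb{Z}$; you instead observe that for $\varepsilon>0$ \emph{every} content agent can experiment (probability $\varepsilon^{c}/(|\mathcal{A}_i|-1)>0$, implicitly requiring $|\mathcal{A}_i|\ge 2$, which \eqref{Eq:AgentDynamics} already presupposes) and thereby enter the reset branch and turn discontent with probability $1-\varepsilon^{1-u_{i,k}}>0$, so an all-discontent state is reached in a single step without interdependence. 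This is a genuine simplification: interdependence is only truly needed for the zero-noise analysis (Lemma~\ref{Lem:RecurrenceClasses}), not for irreducibility of $\Pr^{\varepsilon}$, and your route makes that separation cleaner. For Property~iii the paper merely remarks that transition probabilities contain powers of $\varepsilon$ or their complements; your explicit monomial decomposition $\Pr^{\varepsilon}_{xy}=\sum_{(a,w)}C_{(a,w)}\varepsilon^{\sigma(a,w)}(1+o(1))$ with the no-cancellation argument (all coefficients positive, so the minimal exponent survives) is a legitimate and more careful filling-in of the step the paper glosses over, and correctly accommodates the real-valued exponents $1-u_{i,k}$. Your self-loop argument for aperiodicity via a $\mathds{C}^{0}$-type state and the definition of $\rho$ matches the paper's. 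No gaps.
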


The proof is presented in Appendix~\ref{App:ProofLemmas}. Next, we use Young's result from Theorem~\ref{Thm:Young93} to obtain the distributed learning outcome stated below.
\begin{theorem} \label{Thm:ParetoEffLearning}
Let $G$ be an interdependent $n$-person game on a finite joint action space $\mathcal{A}$, subject to i.i.d. disturbances from a finite set $\mathbb{W}$. Under the dynamics defined in~\eqref{Eq:AgentDynamics}--\eqref{Eq:StateDynamics}, a state $z = [\bar{a},\bar{u},m] \in \mathbb{Z}$ is stochastically
stable if and only if the following conditions are satisfied:
\begin{enumerate}[label=\roman*.]
\item The action profile $\bar{a}$ maximises the network welfare, i.e.,
\begin{equation} \label{Eq:WelfareMax}
(\bar{a},\bar{w}) \in \argmax_{a \in \mathcal{A}, w \in \mathbb{W}} \mathcal{W} = \argmax_{a \in \mathcal{A}, w \in \mathbb{W}} \sum_{i \in N} \mathcal{U}_i(a,w)
\end{equation}
\item The benchmark actions and payoffs are aligned for the maximising disturbance, i.e., $\bar{u}_i = \mathcal{U}_i(\bar{a},\bar{w})$. 
\item The mood of each agent is content. 
\end{enumerate}
\end{theorem}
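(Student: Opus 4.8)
The plan is to apply Young's Theorem~\ref{Thm:Young93}, which is available here by Lemma~\ref{Lem:RegPert}, and thereby reduce the claim to a combinatorial computation of stochastic potentials over the recurrence classes of $\Pr^0$. I would proceed in three stages: enumerate the recurrence classes of the unperturbed chain, compute the resistances between them, and then identify the rooted spanning tree of $\mathcal{G}$ with minimum total resistance and read off its root. For the first stage, note that at $\varepsilon=0$ no agent experiments and, since $p_{\mathcal{C}}=\varepsilon^{1-u}\to 0$ for payoffs in $[0,1)$, a discontent agent stays discontent. I would first argue that the all-discontent configurations form a single recurrence class $D$, since discontent agents randomise uniformly and can reach any other all-discontent configuration while remaining discontent. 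Next I would show that an all-content state in which the agents play a common baseline profile $\bar a$ with baseline payoffs lying within $\rho$ of every realisable $\mathcal{U}_i(\bar a,w)$ is absorbing: by the definition of $\rho$ in~\eqref{Eq:RhoDef}, playing $\bar a$ keeps each agent inside its $\pm\rho$ interval for all $w\in\mathbb{W}$, so the first branch of~\eqref{Eq:StateDynamics} fires. Denote these singleton classes by $C^{a,w}$, indexed so that $\bar a=a$ and $\bar u_i=\mathcal{U}_i(a,w)$. Finally I would invoke interdependence (Definition~\ref{Def:Interdependence}, with $J$ the set of discontent agents) to show that every mixed state is transient under $\Pr^0$: a discontent agent can move some content agent's payoff by more than $\rho$, flipping it to discontent, and iterating drives the chain into $D$.

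For the second stage I would establish three resistances. First, $r(C^{a,w}\to D)=c$: a single experiment has probability $\varepsilon^{c}$, hence resistance $c$, and interdependence guarantees an experiment that pushes some content agent out of its interval, after which the cascade to all-discontent uses only resistance-zero transitions; since no destabilisation is possible without at least one experiment, the resistance is exactly $c$. Second, $r(D\to C^{a,w})=n-\mathcal{W}(a,w)$: from $D$ the agents reach profile $a$ and the disturbance takes value $w$ with resistance zero (uniform randomisation and full support of $\Pr_{w}$), after which agent $i$ turns content with probability $\varepsilon^{1-\mathcal{U}_i(a,w)}$, so the joint resistance is $\sum_{i\in N}(1-\mathcal{U}_i(a,w))=n-\mathcal{W}(a,w)$. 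Third, every outgoing edge from a content class has resistance at least $c$, since leaving such a class requires an experiment.

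For the third stage, writing $M$ for the number of content recurrence classes, a tree rooted at $C^{a,w}$ can route the other $M-1$ content classes to $D$ at cost $c$ each and $D$ to the root at cost $n-\mathcal{W}(a,w)$, giving total $(M-1)c+n-\mathcal{W}(a,w)$; by the third resistance fact and the optimality of the direct edge $D\to C^{a,w}$ this is minimal among trees with that root. A tree rooted at $D$ instead costs $Mc$. Since $c>n$, every content-rooted tree beats the $D$-rooted tree, and among content roots the total is minimised precisely by maximising $\mathcal{W}(a,w)$. Hence the classes of minimum stochastic potential are exactly the $C^{a,w}$ with $(a,w)\in\argmax_{a\in\mathcal{A},\,w\in\mathbb{W}}\mathcal{W}$, which by Young's Theorem are the stochastically stable states; these are characterised precisely by conditions~i--iii, while sub-optimal or misaligned content classes carry strictly larger potential and are thereby excluded, yielding the ``only if'' direction.

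I expect the main obstacle to be the rigorous resistance computations in the presence of disturbances, and in particular the identity $r(C^{a,w}\to D)=c$, which requires carefully invoking interdependence to guarantee that a single experiment is enough to trigger a full cascade to discontent. A related difficulty is confirming that the $\pm\rho$ interval rule renders the aligned content states absorbing across all disturbance realisations, yet does not create additional low-potential misaligned content classes reachable from $D$ at resistance below $n-\max_{a,w}\mathcal{W}(a,w)$; handling this cleanly is what distinguishes the argument from the disturbance-free setting of~\cite{Marden2014}.
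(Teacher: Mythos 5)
Your overall strategy is the same as the paper's: invoke Lemma~\ref{Lem:RegPert} to apply Theorem~\ref{Thm:Young93}, enumerate the recurrence classes of $\Pr^{0}$, tabulate resistances, and compare minimum-resistance rooted trees. The resistances you compute ($r(\mathds{D}\rightarrow z^{0})=\sum_{i}(1-\bar{u}_{i})$, $r(z^{0}\rightarrow\mathds{D})=c$, every edge leaving a content class costing at least $c$) and the final tree comparison using $c>n$ all match Table~\ref{Tab:Resistances} and Lemmas~\ref{Lem:StochPotC0}--\ref{Lem:NotStochStable}.

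However, there is a genuine gap in your first stage: your enumeration of the recurrence classes is incomplete, and the issue you defer to the end as a ``related difficulty'' is not a technicality to be checked but a missing piece of the argument. You list only the all-discontent class and the \emph{aligned} content singletons $C^{a,w}$ with $\bar{u}_{i}=\mathcal{U}_{i}(a,w)$. But the interval rule in~\eqref{Eq:StateDynamics} creates additional absorbing states under $\Pr^{0}$: if a proper subset of content agents experiments and becomes content with new (aligned) utilities while the remaining agents' payoffs under the new joint profile stay within $\pm\rho$ of their \emph{old} baselines for every $w\in\mathbb{W}$, the latter never update, and the resulting state has baseline payoffs that are not aligned with the current action profile for any single disturbance. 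Such a state satisfies the first branch of~\eqref{Eq:StateDynamics} for all agents and all disturbance realisations, so it is a singleton recurrence class of $\Pr^{0}$. These are the classes $\mathds{C}^{m}$, $1\le m<n$, in the paper, and they are not hypothetical: Example~\ref{Ex:FixedEps} exhibits one explicitly (e.g.\ $\{[2,0.85,\mathcal{C}],[1,0.80,\mathcal{C}]\}$). Your worry about whether they are ``reachable from $D$ at resistance below $n-\max\mathcal{W}$'' misframes the problem: they exist regardless, so your count $M$ of content classes and your candidate trees are both wrong as stated. The repair is to include them in the enumeration and then show two things: first, that in a tree rooted at an aligned content state each such class can be attached to $\mathds{D}$ at uniform cost $c$ (and that no cheaper routing exists), so they shift every candidate potential by the same constant and do not disturb the $\argmax$ over $(a,w)$; second, that none of them can be the root of the minimum-potential tree, because any tree rooted at $z^{m}\in\mathds{C}^{m}$ must contain a path $d\rightarrow z^{0}\rightarrow\cdots\rightarrow z^{m}$ of resistance at least $mc+\sum_{i}(1-\bar{u}_{i}(z^{0}))$, which can be rerouted to produce a strictly cheaper tree rooted at $z^{0}$. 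Without these two steps (the content of Lemmas~\ref{Lem:StochPotC0} and~\ref{Lem:NotStochStable}) the ``only if'' direction of the theorem is not established.
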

We present the proof for this theorem in the next section.

\subsection{Recurrence Classes}

The states $z \in \mathbb{Z}$ can be classified into three categories: states where all agents are content or discontent and states where some agents are content and others discontent. By inspecting the algorithm in~\eqref{Eq:AgentDynamics}--\eqref{Eq:StateDynamics}, it is easy to see that as $\varepsilon \rightarrow 0$ the former states can be recurrent, but not the latter. We formalise this notion below, by defining the recurrence classes $\mathds{D}$ and $\mathds{C}^m$ for $0 \le m < n$ and showing that there are no other recurrence classes.

\noindent \textbf{Discontent Class $\mathds{D}$:} The states in this recurrence class correspond to those where all agents are discontent.
    \begin{equation} \label{Eq:DiscontentClass}
        \begin{aligned}
            \mathds{D} := \Big \{ z \in \mathbb{Z} \quad \big | \quad m_i(z) &= \mathcal{D} , \; \forall i \in N \Big \} 
        \end{aligned}
    \end{equation}
Note that the payoffs and action profiles are aligned, i.e., $\bar{u}_i(z) = \mathcal{U}_i(\bar{a}(z),w)$, $\forall z \in \mathds{D}$, $\forall i \in N$ and for some $w \in \mathbb{W}$. Also, corresponding to each action profile and disturbance pair $(a,w) \in \mathcal{A} \times \mathbb{W}$, there is a discontent state in this recurrence class.

States containing only content agents can be categorised further into $n$ classes, $\mathds{C}^m$ for $0 \le m < n$, as follows. \newline
\noindent \textbf{$0^{\textrm{th}}$-Content Class $\mathds{C}^0$:} This recurrence class contains singleton states where all agents are content, and where the payoffs of all agents are aligned with the action profile for some value of the disturbance $w \in \mathbb{W}$, while satisfying the interval rule in~\eqref{Eq:StateDynamics} for all other values of the disturbance. Let $B_{i}$ denote the set of states that satisfy these conditions on the payoffs of the $i^{\textrm{th}}$ agent:
   \begin{equation} \label{Eq:PayoffCond}
    \begin{aligned}
    B_{i} := \Big \{ z \in \mathbb{Z} \big | &\bar{u}_i(z) = \mathcal{U}_i(\bar{a}(z),w), \text{for some} \ w \in \mathbb{W} , \\
            &|\bar{u}_i(z) - \mathcal{U}_i(\bar{a}(z),\tilde{w})| \le \rho , \; \; \forall \ \tilde{w} \in \mathbb{W} \Big \} .
    \end{aligned}
   \end{equation}
    Then, the recurrence class $\mathds{C}^0$ is defined as
    \begin{equation} \label{Eq:ContentClass0}
        \begin{aligned}
            \mathds{C}^0 := \Big \{ z \in \mathbb{Z} \quad \big | \quad m_i(z) = \mathcal{C} , 
            z \in B_{i} , \forall i \in N \Big \} \; .
        \end{aligned}
    \end{equation}
From the definition of $\rho$ in~\eqref{Eq:RhoDef}, we know that corresponding to each action profile and disturbance pair $(a,w) \in \mathcal{A} \times \mathbb{W}$, there is a state in this recurrence class with payoffs satisfying~\eqref{Eq:PayoffCond}. 

There might also be states where the payoffs of all agents are not aligned with the action profile for any single value of the disturbance $w \in \mathbb{W}$. Some of these states can be recurrent, and belong to the classes defined below. \newline
\noindent \textbf{$1^{\textrm{st}}$-Content Class $\mathds{C}^1$:} Suppose that a proper subset of agents $J_{1} \subset N$ from a state $z' \in \mathds{C}^{0}$ experiment with different actions despite being content, and become content with their new utilities. If the rest of the agents $j_{0} \in J_{0} = N \setminus J_{1}$ do not notice this change, because their new utilities lie within the interval $[\bar{u}_{j_{0}}(z')-\rho,\bar{u}_{j_{0}}(z')+\rho]$ for all values of the disturbance, then the agents find themselves in a state $z$ in a recurrence class $\mathds{C}^{1}$. 
    \begin{equation} \label{Eq:ContentClass1}
        \begin{aligned}
            \mathds{C}^1 := \Big \{ z \in \mathbb{Z} \; \big | \; &m_i(z) = \mathcal{C} , \forall i \in N , \\
            &\exists (J_{0},J_{1}) \; \text{s.t.} \; J_{0} \cupdot J_{1} = N , 
            z \in B_{j_{1}} , \quad \forall \; j_{1} \in J_{1} , \\
            & \exists z' \in \mathds{C}^{0} \; \text{s.t.} \; z_{j_{0}} = z'_{j_{0}} , 
            |\bar{u}_{j_{0}}(z') - \mathcal{U}_{j_{0}}(\bar{a}(z),\tilde{w})| \le \rho , 
            \quad \forall \; \tilde{w} \in \mathbb{W} , \forall \; j_{0} \in J_{0} \; \Big \} ,
        \end{aligned}
    \end{equation}
where the symbol $\cupdot$ denotes a disjoint union of the subsets.

A subset of agents from a state in $\mathds{C}^{1}$ could experiment and find themselves in a state in a recurrence class $\mathds{C}^{2}$. In general, states in the recurrence class $\mathds{C}^m$ can be reached from a state in $\mathds{C}^{m-1}$, following a similar procedure. 
The recurrence class $\mathds{C}^m$ is defined below. \newline
\noindent \textbf{$m^{\textrm{th}}$-Content Class $\mathds{C}^m$:} These recurrence classes contain singleton states where all agents are content, and where the agents can be divided into $m+1$ mutually disjoint subsets $J_{0},\dots,J_{m}$, such that the utilities of the agents within each subset are aligned with an action profile for some value of the disturbance. 
    \begin{equation} \label{Eq:ContentClassm}
        \begin{aligned}
            \mathds{C}^m := \Big \{ z \in \mathbb{Z} \; \big | \; &m_i(z) = \mathcal{C} , \forall i \in N , \\
            &\exists (J_{0},\dots, J_{m}) \; \text{s.t.} \; \cupdot_{l=0}^{m} J_{l} = N , 
            z \in B_{j_{m}} , \quad \forall \; j_{m} \in J_{m} , \\
            & \exists z' \in \mathds{C}^{m-1} \; \text{s.t.} \; z_{j_{\ell}} = z'_{j_{\ell}} , 
            |\bar{u}_{j_{\ell}}(z') - \mathcal{U}_{j_{\ell}}(\bar{a}(z),\tilde{w})| \le \rho , 
            \quad \forall \; \tilde{w} \in \mathbb{W} , \forall \; j_{\ell} \in N \setminus J_{m} \; \Big \} .
        \end{aligned}
    \end{equation}
There can be at most $n$ disjoint subsets from a set of $n$ agents, and hence $m < n$. 
Clearly, there might be many states in $\mathbb{Z}$, where the baseline payoffs and actions satisfy some, but not all, of the above conditions for classes $\mathds{C}^m$, $1 \le m < n$. These states are not recurrent, as we show below.

\begin{lemma} \label{Lem:RecurrenceClasses}
The recurrence classes corresponding to the $n$-person interdependent game described by~\eqref{Eq:AgentDynamics}--\eqref{Eq:StateDynamics} are $\mathds{D}$, and the singletons in $\mathds{C}^{0}$ and $\mathds{C}^{m}$, for $0 < m < n$, as defined in \eqref{Eq:DiscontentClass}--\eqref{Eq:ContentClassm}, respectively.
\end{lemma}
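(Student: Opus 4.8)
The plan is to classify every state of $\mathbb{Z}$ as either recurrent or transient under $\Pr^0$ and to match the recurrent ones with $\mathds{D}$ and the singletons of the $\mathds{C}^m$. First I would record the dynamics of $\Pr^0$ obtained by setting $\varepsilon=0$ in~\eqref{Eq:AgentDynamics}--\eqref{Eq:StateDynamics}: a content agent plays its baseline action with probability one (no experimentation, since $\varepsilon^c=0$), a discontent agent plays uniformly on $\mathcal{A}_i$, the disturbance $w_k$ is still drawn from the fully supported $\Pr_w$, and any agent that fails the retain test is reset to discontent with probability one because $p_{\mathcal{C}}=\varepsilon^{1-u_{i,k}}\to 0$. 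In particular a discontent agent can never turn content under $\Pr^0$, so the set of discontent agents is non-decreasing along every $\Pr^0$ trajectory. A recurrence class is a closed, internally communicating set of states, so the argument splits into showing (a) $\mathds{D}$ is such a class, (b) each state of $\bigcup_{m}\mathds{C}^m$ is an absorbing singleton, and (c) every remaining state is transient.

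For (a), from any all-discontent state the uniform play of all agents together with the full support of $\Pr_w$ reaches, in one step and with positive probability, the discontent state whose baselines are aligned with any chosen pair $(a,w)$; since $p_{\mathcal{C}}=0$ keeps every agent discontent, $\mathds{D}$ is closed and irreducible, hence a single recurrence class. For (b), the key observation is that in an all-content state every agent plays its baseline, so the realized profile is exactly $\bar a$ and agent $i$ receives $\mathcal{U}_i(\bar a,w_k)$. By the definition of $\rho$ in~\eqref{Eq:RhoDef}, $|\mathcal{U}_i(\bar a,w_k)-\mathcal{U}_i(\bar a,w)|\le\rho$ for any reference disturbance $w$, so the interval condition built into $B_i$ in~\eqref{Eq:PayoffCond} and into the $\mathds{C}^m$ definitions~\eqref{Eq:ContentClass0}--\eqref{Eq:ContentClassm} guarantees $|\mathcal{U}_i(\bar a,w_k)-\bar u_i|\le\rho$ for every realization $w_k$. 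Every agent therefore passes the retain test with probability one and the state is fixed under $\Pr^0$, so each element of $\bigcup_m\mathds{C}^m$ is an absorbing singleton, and hence its own recurrence class.

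For (c) I would treat two kinds of remaining states. An all-content state lying in no $\mathds{C}^m$ must violate the interval condition for some agent $i$ and some disturbance $w^*$; when $w^*$ is realized (positive probability) $i$ fails the retain test and turns discontent, so the state is left and, by monotonicity of the discontent set, cannot be re-entered, i.e.\ it is transient. A mixed-mood state, with nonempty content set $C$ and discontent set $D=N\setminus C$, is handled with the interdependence property~\eqref{Eq:Interdependence} applied to the proper subset $J=D$: there is a content agent whose payoff can be moved by more than $\rho$ through the actions of the discontent agents and a choice of disturbance, which the uniform play of $D$ and the full support of $\Pr_w$ realize with positive probability, forcing that agent to re-evaluate to discontent. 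Iterating enlarges $D$ until all agents are discontent, so every mixed state reaches the absorbing class $\mathds{D}$ and is transient. Together with (a) and (b) this shows the listed classes are exactly the recurrence classes.

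The main obstacle I expect is the interdependence step in (c): interdependence only promises a payoff gap strictly larger than $\rho$, whereas the retain interval $[\bar u_i-\rho,\bar u_i+\rho]$ has full width $2\rho$, so a single application need not push the realized payoff outside that interval. The delicate point is to align the reference configuration used in~\eqref{Eq:Interdependence} with the profile at which the targeted content agent last set its baseline $\bar u_i$, so that the guaranteed ${>}\,\rho$ deviation is measured from $\bar u_i$ itself and genuinely escapes the interval; this typically forces one to process the content agents in the order their baselines were fixed and to invoke interdependence repeatedly rather than once. A secondary obstacle is the converse inclusion implicit in (b), namely that every reachable absorbing all-content state really does admit the nested partition $J_0\cupdot\cdots\cupdot J_m$ of the $\mathds{C}^m$ definition; this is most naturally proved by induction on the number of experimentation generations that produced the state, tracing each agent's baseline back to the content state from which it last deviated.
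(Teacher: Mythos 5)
Your proposal is correct and follows essentially the same route as the paper's proof: $\mathds{D}$ is closed under $\Pr^0$ because $p_{\mathcal{C}} = \varepsilon^{1-u_{i,k}} \to 0$ means no agent ever becomes content, each singleton in $\bigcup_m \mathds{C}^m$ is absorbing because the interval conditions built into \eqref{Eq:PayoffCond}--\eqref{Eq:ContentClassm} make every agent pass the retain test for every realized disturbance, and the remaining (mixed-mood or non-conforming all-content) states drain into $\mathds{D}$ via the interdependence property. The two obstacles you flag are genuine but are not resolved by the paper either --- its proof also invokes interdependence with the reference payoff $\mathcal{U}_i(a,w)$ rather than the agent's baseline $\bar{u}_i$ (which may already differ from that reference by up to $\rho$, so the guaranteed ${>}\,\rho$ deviation need not escape the width-$2\rho$ retain interval), and it leaves implicit the claim that every reachable absorbing all-content state admits the nested partition of \eqref{Eq:ContentClassm} --- so your proposed remedies (anchoring interdependence at the profile where each baseline was set, and induction on experimentation generations) supply detail the paper omits rather than falling short of it.
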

The proof of this result is presented in Appendix~\ref{App:ProofLemmas}.

\subsection{Resistances and Trees}

\begin{table}[tb]
\renewcommand{\arraystretch}{1.3}
\caption{Resistances Between Recurrence Classes}
\label{Tab:Resistances}
\centering
\begin{tabular}{| >{$} r <{$} | >{\centering\arraybackslash}m{3cm} || >{\centering\arraybackslash}m{4cm} |}
\hline
\text{No.} & Path & Resistance Relationship \\
\hline \hline
1 & $\mathds{D} \rightarrow \mathds{C}^0$ & $r_{d z^0} = \sum_{i \in N} 1 - \bar{u}_i(z^0)$ \\ 
2 & $\mathds{D} \rightarrow \mathds{C}^m$ & $r_{d z^m} = \min_{z^0 \in \mathds{C}^0} r_{d z^0} + r_{z^0 z^m}$ \\ 
3 & $\mathds{C}^0 \rightarrow \mathds{D}$ & $r_{z^0 d} = c$ \\ 
4 & $\mathds{C}^m \rightarrow \mathds{D}$ & $r_{z^m d} = c$ \\ 
5 & $\mathds{C}^0 \rightarrow \mathds{C}^0$ & $c \le r_{z^0_1 z^0_2} \le 2c$ \\ 
6 & $\mathds{C}^m \rightarrow \mathds{C}^m$ & $c \le r_{z^m_1 z^m_2} \le \bar{r}_m$ \\
7 & $\mathds{C}^l \rightarrow \mathds{C}^m$, \small{$0 \le l,m <n$, $m \neq l$} & $|m-l| c \le r_{z^m_1 z^m_2} \le \bar{r}_m$ \\
\hline
\end{tabular}
\end{table}

Transitions can occur between all three recurrence class types, namely $\mathds{D} \rightarrow \mathds{C}^0$ and vice versa, $\mathds{D} \rightarrow \mathds{C}^{m}$ and vice versa, and $\mathds{C}^l \rightarrow \mathds{C}^m$ for $0 \le l,m < n$ and $l \neq m$. In addition, the singleton states in $\mathds{C}^0$ and $\mathds{C}^m$ can transition to other singleton states within the same classes. All these transitions are enumerated along with the corresponding resistances in Table~\ref{Tab:Resistances}. In this table, we use $d \in \mathds{D}$, $z^0_{\cdot} \in \mathds{C}^0$ and $z^m_{\cdot} \in \mathds{C}^m$ to denote states in the respective recurrence classes. Some of the entries contain the term $\bar{r}_{m}$, which is given by
\begin{equation}
\bar{r}_{m} = mc^{2} + \frac{4+m-m^{2}}{2} c - \frac{m(m+1)}{2} , \; 0 < m < n \; . \label{Eq:rmm}
\end{equation}
The calculations for the entries in Table~\ref{Tab:Resistances} are presented in Appendix~\ref{App:ResistanceCalc}. We can now compute the stochastic potential of a state in $\mathds{C}^{0}$ and show that a minimum potential tree is rooted at a singleton in $\mathds{C}^{0}$.
\begin{lemma} \label{Lem:StochPotC0}
The stochastic potential of a state $z^{0} \in \mathds{C}^{0}$ is
\begin{equation} \label{Eq:StochPotC0}
\gamma(z^{0}) = c \left( \sum_{m=0}^{n-1} |\mathds{C}^{m}| -1 \right) + \sum_{i \in N} \left( 1 - \bar{u}_{i}(z^{0}) \right) \; .
\end{equation}
\end{lemma}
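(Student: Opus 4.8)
The plan is to compute the minimum-resistance arborescence directly from Definition~\ref{Def:StochPotential}. The vertex set of $\mathcal{G}$ consists of the single class $\mathds{D}$ together with each singleton content state, so $\mathcal{G}$ has $1+\sum_{m=0}^{n-1}|\mathds{C}^m|$ vertices and every $z^0$-tree has exactly $\sum_{m=0}^{n-1}|\mathds{C}^m|$ edges. I would first establish the upper bound on $\gamma(z^0)$ by exhibiting one explicit $z^0$-tree: let every content state other than $z^0$ point to $\mathds{D}$, and let $\mathds{D}$ point to $z^0$. By rows~3--4 of Table~\ref{Tab:Resistances} each of the $\sum_{m=0}^{n-1}|\mathds{C}^m|-1$ content-to-$\mathds{D}$ edges has resistance $c$, and by row~1 the edge $\mathds{D}\to z^0$ has resistance $\sum_{i\in N}(1-\bar u_i(z^0))$. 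This is a valid spanning tree rooted at $z^0$, and its total resistance equals the right-hand side of~\eqref{Eq:StochPotC0}.

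For the matching lower bound, take an arbitrary $z^0$-tree $T$ and single out the unique directed path $\mathds{D}\to y_0\to\cdots\to y_k=z^0$ from $\mathds{D}$ to the root. Since a second visit to $\mathds{D}$ would create a cycle, the intermediate vertices $y_0,\dots,y_{k-1}$ are all content states, so $T$ contains $k$ content-to-content path edges and $(\sum_{m=0}^{n-1}|\mathds{C}^m|-1)-k$ further content-state out-edges. Bounding each off-path out-edge by its minimum value $c$ (rows~3--7) leaves only the cost of the path $\mathds{D}\to z^0$ to be estimated. I would write the welfare as $W(z):=\sum_{i\in N}\bar u_i(z)$, so that row~1 reads $r_{dy_0}\ge n-W(y_0)$, and then invoke the key estimate that every content-to-content transition obeys $r_{z_1 z_2}\ge c+W(z_1)-W(z_2)$. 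Telescoping along the path then gives $r_{dy_0}+\sum_{j=0}^{k-1}r_{y_jy_{j+1}}\ge kc+n-W(z^0)=kc+\sum_{i\in N}(1-\bar u_i(z^0))$, and adding the off-path contribution $c\big((\sum_{m=0}^{n-1}|\mathds{C}^m|-1)-k\big)$ recovers exactly the claimed formula, independently of how $T$ is routed.

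The crux of the argument, and the step I expect to be hardest, is the welfare-sensitive lower bound $r_{z_1 z_2}\ge c+W(z_1)-W(z_2)$ on content-to-content resistances, which refines the crude two-sided bounds in rows~5--7 of Table~\ref{Tab:Resistances}. Intuitively it splits the resistance into an experimentation cost of at least $c$ (one agent must leave its baseline action) plus a mood-realignment cost incurred when re-entering the content mood at the new payoffs; I would read the exact resistance off the calculations in Appendix~\ref{App:ResistanceCalc} and verify the inequality. This is precisely where the standing assumption $c>n$ does the work: because any welfare difference is at most $n<c$, the experimentation term dominates, so $c+W(z_1)-W(z_2)>0$ and no detour through a higher-welfare class can undercut the direct edge $\mathds{D}\to z^0$. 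The same estimate, combined with row~2, also yields $r_{dy_0}\ge n-W(y_0)$ for the classes $\mathds{C}^m$ with $m>0$, closing the remaining case of the telescoping.
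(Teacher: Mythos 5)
Your upper-bound construction is exactly the paper's: the star-shaped tree with all content singletons pointing to $\mathds{D}$ and one edge $\mathds{D}\rightarrow z^{0}$. For the lower bound, however, you take a genuinely different route. The paper argues by tree surgery: it isolates the two ways an arbitrary $z^{0}$-tree $T'$ can deviate from the star (a long path from $\mathds{D}$ to the root through intermediate content states, or long paths from content states down to $\mathds{D}$), asserts from Table~\ref{Tab:Resistances} that each such path costs at least as much as the corresponding bundle of direct links, and replaces them to obtain a tree of no greater resistance. You instead count edges globally — every content-state out-edge costs at least $c$, and the unique $\mathds{D}$-to-root path is handled by telescoping the welfare-sensitive estimate $r_{z_{1}z_{2}}\ge c+W(z_{1})-W(z_{2})$ together with $r_{dy_{0}}\ge n-W(y_{0})$ — so the bound falls out in one computation without case analysis or path replacement. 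Your key estimate is correct and does follow from the resistance calculations: any escape from a content singleton requires at least one experimentation of resistance $c$, and every agent whose baseline changes must re-enter contentment at cost $1-\bar{u}_{i}(z_{2})\ge\bar{u}_{i}(z_{1})-\bar{u}_{i}(z_{2})$, while detours through $\mathds{D}$ cost $c+n-W(z_{2})\ge c+W(z_{1})-W(z_{2})$; you flag this as the hardest step and only sketch its verification, so a complete write-up would need to spell it out, but it is precisely the uniform statement that the paper's case-(a) assertion $r(d\rightarrow z^{m_{1}}_{1}\rightarrow\dots\rightarrow z^{0})\ge qc+\sum_{i\in N}(1-\bar{u}_{i}(z^{0}))$ implicitly relies on, so your version arguably makes the paper's argument more transparent. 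One small caveat: the assumption $c>n$ is not really what makes your telescoping work (the welfare terms cancel exactly regardless); it matters later, in Lemma~\ref{Lem:NotStochStable}, for comparing content and discontent roots.
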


\begin{lemma} \label{Lem:NotStochStable}
The states in the recurrence class $\mathds{D}$ and the singletons $\mathds{C}^{m}$, for $0 < m < n$, are not stochastically stable.
\end{lemma}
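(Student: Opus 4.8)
The plan is to apply Young's result (Theorem~\ref{Thm:Young93}) in contrapositive form: a state is stochastically stable if and only if it lies in a recurrence class of minimum stochastic potential, so it suffices to show that every state in $\mathds{D}$ and every singleton in $\mathds{C}^m$ for $0 < m < n$ has stochastic potential strictly larger than the minimum achieved in $\mathds{C}^0$. From Lemma~\ref{Lem:StochPotC0} we already have a clean closed form for the stochastic potential of any $z^0 \in \mathds{C}^0$, namely $\gamma(z^0) = c\left(\sum_{m=0}^{n-1}|\mathds{C}^m|-1\right) + \sum_{i \in N}(1-\bar{u}_i(z^0))$. The strategy is therefore to compute (or at least lower-bound) the stochastic potential of the other classes and compare.

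First I would handle the discontent class $\mathds{D}$. I would build a candidate spanning tree rooted at a state in $\mathds{D}$ and compare its cost to the $\mathds{C}^0$-rooted tree implicitly constructed in Lemma~\ref{Lem:StochPotC0}. The key observation is that any edge leaving a content class toward $\mathds{D}$ carries resistance $c$ (rows~3--4 of Table~\ref{Tab:Resistances}), whereas the reverse edges $\mathds{D} \to \mathds{C}^0$ carry the smaller resistance $\sum_{i}(1-\bar{u}_i(z^0))$ (row~1), which is bounded by $n$ since each $\bar u_i \in [0,1)$ and $c > n$. Intuitively, re-rooting a minimum tree from $\mathds{C}^0$ to $\mathds{D}$ forces us to reverse at least one low-cost $\mathds{D} \to \mathds{C}^0$ edge into a high-cost $\mathds{C}^0 \to \mathds{D}$ edge, producing a strict net increase because $c > n > \sum_i (1-\bar u_i(z^0))$. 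I would make this precise with the standard tree-surgery argument: take an optimal $\mathds{C}^0$-tree, add the edge from the chosen $\mathds{D}$-root toward its successor, and remove the unique edge on the resulting cycle that points out of $\mathds{D}$, bounding the cost change.

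Next I would treat the classes $\mathds{C}^m$ for $0 < m < n$. Here the comparison exploits that entering a higher class is costly: row~7 of Table~\ref{Tab:Resistances} gives $r_{\mathds{C}^l \to \mathds{C}^m} \ge |m-l|c$, so reaching $\mathds{C}^m$ for $m \ge 1$ from $\mathds{C}^0$ costs at least $mc \ge c$ per the resistance structure, while the exit edges from $\mathds{C}^m$ (rows~4 and~7) and the definition of $\bar r_m$ in~\eqref{Eq:rmm} govern how expensive it is to leave. The argument is again tree surgery: starting from an optimal tree rooted at some $z^0 \in \mathds{C}^0$, I would re-root it at a target singleton $z^m \in \mathds{C}^m$, showing that the edge that must be added on the path into $\mathds{C}^m$ has resistance at least $c$ larger than the edge removed on the induced cycle, so the potential strictly increases. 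Crucially, the interdependence property guarantees that such states are genuinely reachable and that no degenerate shortcut collapses the resistances, which validates using the Table~\ref{Tab:Resistances} bounds as edge weights in $\mathcal{G}$.

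The main obstacle I expect is the $\mathds{C}^m$ case rather than $\mathds{D}$: the resistance bounds in rows~5--7 are \emph{intervals} $[c, \bar r_m]$ rather than exact values, so a naive comparison of stochastic potentials via Lemma~\ref{Lem:StochPotC0}'s formula cannot be done directly — one must argue that the \emph{minimum-cost} tree rooted at $\mathds{C}^m$ cannot beat the $\mathds{C}^0$ minimum, using only one-sided (lower) bounds on the incoming edges and one-sided (upper) bounds on the edges being replaced. The cleanest way around this is to avoid recomputing $\gamma(z^m)$ from scratch and instead perform the re-rooting surgery on the \emph{already optimal} $\mathds{C}^0$-tree, so that the only quantities I need are the marginal cost of redirecting one edge; this keeps the argument within the $|m-l|c$ lower bound of row~7 and sidesteps the need for tight upper bounds. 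I would conclude by combining both comparisons to assert that no state outside $\mathds{C}^0$ can achieve the minimum stochastic potential, hence none is stochastically stable.
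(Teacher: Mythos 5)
Your target inequality ($\gamma$ of every state in $\mathds{D}$ and every singleton in $\mathds{C}^m$, $0<m<n$, strictly exceeds the minimum attained in $\mathds{C}^0$) and the resistance facts you invoke (cost $c$ to exit a content class into $\mathds{D}$ versus at most $n<c$ to go from $\mathds{D}$ into $\mathds{C}^0$; cost at least $|m-l|c$ to climb content classes) are exactly right. The gap is the \emph{direction} of your tree surgery. To conclude $\gamma(\mathds{D})>\gamma(z^0)$ you need a \emph{lower} bound on $\gamma(\mathds{D})=\min_{T}\gamma(T)$ over all $\mathds{D}$-rooted trees, i.e.\ a statement about every such tree. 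What you propose --- take the optimal $\mathds{C}^0$-tree, re-root it at $\mathds{D}$, and observe that the cost strictly increases --- only exhibits \emph{one} expensive $\mathds{D}$-rooted tree; that is an \emph{upper} bound on $\gamma(\mathds{D})$ and leaves open the possibility that an entirely different $\mathds{D}$-rooted tree is cheaper than $\gamma(z^0)$. A three-node graph with $r(A\to B)=10$, $r(B\to C)=5$ and $r(C\to A)=r(C\to B)=r(A\to C)=1$ shows the failure concretely: re-rooting the optimal $A$-tree $\{B\to C,\,C\to A\}$ at $B$ raises the cost from $6$ to $11$, yet $\gamma(B)=2<\gamma(A)=6$. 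The same objection applies verbatim to your $\mathds{C}^m$ argument, where you explicitly choose to ``avoid recomputing $\gamma(z^m)$ from scratch'' by operating on the optimal $\mathds{C}^0$-tree --- but a lower bound on $\gamma(z^m)$ is precisely what is required.

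The repair is to run the surgery in the opposite direction, which is what the paper does. Take an arbitrary (hence in particular a minimum) tree rooted at $d\in\mathds{D}$; since every content singleton must have a directed path to $d$, this tree necessarily contains an edge from some $z^0\in\mathds{C}^0$ into $\mathds{D}$ of resistance $c$ (rows 3--4 of Table~\ref{Tab:Resistances}). Deleting it and inserting $d\to z^0$, of resistance $\sum_{i\in N}(1-\bar u_i(z^0))\le n<c$, yields a $z^0$-rooted tree of strictly smaller weight, so $\gamma(z^0)<\gamma(\mathds{D})$. Likewise, a minimum tree rooted at $z^m\in\mathds{C}^m$ must contain a path $d\to z^0\to\cdots\to z^m$ of resistance at least $mc+\sum_{i\in N}(1-\bar u_i(z^0))$, and replacing its content-to-content links by $m$ links of resistance $c$ into $\mathds{D}$ produces a $z^0$-rooted tree that is no more expensive --- strictly cheaper once one notes that each $\mathds{C}^l\to\mathds{C}^{l+1}$ transition costs at least $c+1-\bar u_i>c$ because the utilities lie in $[0,1)$. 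All of your one-sided bounds are reusable; only the tree to which the surgery is applied must change.
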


The proofs for both Lemmas are presented in Appendix~\ref{App:ProofLemmas}.

\subsection{Proof of the Main Result}
We now present the proof of Theorem~\ref{Thm:ParetoEffLearning}. \newline
\begin{proof}
The stochastically stable states are contained in the recurrence class of $\Pr^{0}$ with minimum stochastic potential (from Theorem~\ref{Thm:Young93}). From Lemma~\ref{Lem:NotStochStable}, we also know that the recurrence class with minimum stochastic potential is rooted at a singleton in $\mathds{C}^{0}$.

Lemma~\ref{Lem:StochPotC0} gives us the minimum stochastic potential as
\begin{equation*}
\gamma(z^{0,*}) = \min_{z_{0} \in \mathds{C}^{0}} c ( \sum_{m=0}^{n-1} |\mathds{C}^{m}| -1 ) + \sum_{i \in N} ( 1 - \bar{u}_{i}(z^{0}) )
\end{equation*}
Thus, the action profile corresponding to the state $z^{0,*}$ must satisfy $\bar{a}(z^{0,*}) \in \argmax_{a \in \mathcal{A}, w \in \mathbb{W}} \sum_{i \in N} \mathcal{U}_{i}(a,w)$.

From the definition of the recurrence class $\mathds{C}^{0}$ in~\eqref{Eq:ContentClass0}, we obtain statements~ii and~iii of the theorem. 
\end{proof}

\begin{table}[tb]
\renewcommand{\arraystretch}{1.2}
\caption{Payoffs in Example~\ref{Ex:FixedEps}} \label{Tab:SimpleExample}
\centering
\begin{tabular}{| >{$ \{} r <{\} $} | >{$}l<{$} | >{$}l<{$} || >{$ \{} r <{\} $} | >{$}l<{$} | >{$}l<{$} |}
\hline
a_1,a_2,w & u_1 & u_2 & a_1,a_2,w & u_1 & u_2 \\
\hline \hline
0,0,0 & 0.30 & 0.40 & 1,1,0 & 0.80 & 0.90 \\
0,0,1 & 0.40 & 0.30 & 1,1,1 & 0.90 & 0.80 \\
0,1,0 & 0.20 & 0.10 & 2,0,0 & 0.65 & 0.55 \\
0,1,1 & 0.10 & 0.20 & 2,0,1 & 0.55 & 0.65 \\
1,0,0 & 0.60 & 0.50 & 2,1,0 & 0.75 & 0.85 \\
1,0,1 & 0.50 & 0.60 & 2,1,1 & 0.85 & 0.75 \\
\hline
\end{tabular}
\end{table}
\begin{table*}[!tb]
\renewcommand{\arraystretch}{1.2}
\caption{Fraction of occurrence of states in Example~\ref{Ex:FixedEps}} \label{Tab:Hist3}
\centering
\begin{tabular}{| >{\centering\arraybackslash}m{4cm} | m{3cm} || >{\centering\arraybackslash}m{4cm} | m{3cm} |}
\hline
State $z$ & Normalised Number of Instances & State $z$ & Normalised Number of Instances \\
\hline \hline
$\{ [0,0.10,\mathcal{C}] , [1,0.20,\mathcal{C}] \}$ & $0.0009$ & $\{ [1,0.60,\mathcal{C}] , [0,0.50,\mathcal{C}] \}$ & $0.0013$ \\
$\{ [1,0.50,\mathcal{C}] , [0,0.60,\mathcal{C}] \}$ & $0.0016$ & $\{ [1,0.50,\mathcal{C}] , [0,0.50,\mathcal{C}] \}$ & $0.0004$ \\
$\mathbf{\{ [1,0.80,\mathcal{C}] , [1,0.90,\mathcal{C}] \}}$ & $\mathbf{0.9532}$ & $\{ [1,0.90,\mathcal{C}] , [1,0.90,\mathcal{C}] \}$ & $0.0123$ \\
$\{ [1,0.90,\mathcal{C}] , [1,0.80,\mathcal{C}] \}$ & $0.0032$ & $\{ [1,0.90,\mathcal{C}] , [1,0.85,\mathcal{C}] \}$ & $0.0034$ \\
$\{ [1,0.8,\mathcal{C}] , [1,0.85,\mathcal{C}] \}$ & $0.0002$ & $\{ [2,0.65,\mathcal{C}] , [0,0.65,\mathcal{C}] \}$ & $0.0014$ \\
$\{ [2,0.55,\mathcal{C}] , [0,0.65,\mathcal{C}] \}$ & $0.0007$ & $\{ [2,0.65,\mathcal{C}] , [0,0.55,\mathcal{C}] \}$ & $0.0004$ \\
$\{ [2,0.55,\mathcal{C}] , [0,0.55,\mathcal{C}] \}$ & $0.0002$ & $\{ [2,0.65,\mathcal{C}] , [0,0.60,\mathcal{C}] \}$ & $0.0007$ \\
$\{ [2,0.85,\mathcal{C}] , [1,0.85,\mathcal{C}] \}$ & $0.0022$ & $\{ [2,0.75,\mathcal{C}] , [1,0.85,\mathcal{C}] \}$ & $0.0015$ \\
$\{ [2,0.85,\mathcal{C}] , [1,0.75,\mathcal{C}] \}$ & $0.0059$ & $\{ [2,0.75,\mathcal{C}] , [1,0.80,\mathcal{C}] \}$ & $0.0088$ \\
$\{ [2,0.85,\mathcal{C}] , [1,0.80,\mathcal{C}] \}$ & $0.0006$ &  & \\
\hline
\end{tabular}
\end{table*}

\section{Examples} \label{Sec:Examples}

We present a simple example of a two-agent interdependent game to illustrate the results of Theorem~\ref{Thm:ParetoEffLearning}, and then apply this method to the ramp coordination problem.

\begin{example} \label{Ex:FixedEps}
Consider a simple game $G_2$ with $n=2$ agents. The action sets, disturbance set and payoffs are given in Table~\ref{Tab:SimpleExample}. The disturbance process is uniformly distributed on $\{0,1\}$. It is easy to verify that $\rho = 0.1$ (from~\eqref{Eq:RhoDef}), and that the interdependence property (from Definition~\ref{Def:Interdependence}) is satisfied, for $G_2$.

We simulated $10^{6}$ iterations of the algorithm~\eqref{Eq:AgentDynamics}--\eqref{Eq:StateDynamics} in Matlab, with a time-varying $\varepsilon$-sequence, and $c=2$. The experimentation rate was modified by setting $\varepsilon_{k+1} = 0.99995 \varepsilon_{k}$, with an initial value of $\varepsilon_{1} = 0.1$. The results of a typical sample run of our simulation are presented in Table~\ref{Tab:Hist3}, validating the results of Theorem~\ref{Thm:ParetoEffLearning}. The average welfare over all the iterations was $1.6937$.

In Table~\ref{Tab:Hist3}, we have displayed a list of states and the normalised number of occurrences of these states, only when this figure was larger than $0.0001$. This is because a total of $101$ states were explored by this simulation. Note that some of the states, such as $\{ [2,0.85,\mathcal{C}] , [1,0.80,\mathcal{C}] \}$ are examples of states in $\mathds{C}^1$. 
\end{example}

The average welfare obtained from playing the optimal action profile(s) will, in general, be different from $\mathcal{W}^*$, the maximum welfare in~\eqref{Eq:WelfareMax}. This is because the optimal action profile maximises the welfare for the most favourable value of the disturbance as per Theorem~\ref{Thm:ParetoEffLearning}. When averaged over all possible values of the disturbance, the welfare will lie in the interval $[\mathcal{W}^* - n \rho , \mathcal{W}^* + n \rho]$, depending on $\Pr_w$. For the above example, the average welfare equals $\mathcal{W}^*$, which may not always be the case as we see in the next example.

\begin{figure*}[th]
	\centering
		\includegraphics[width=\textwidth]{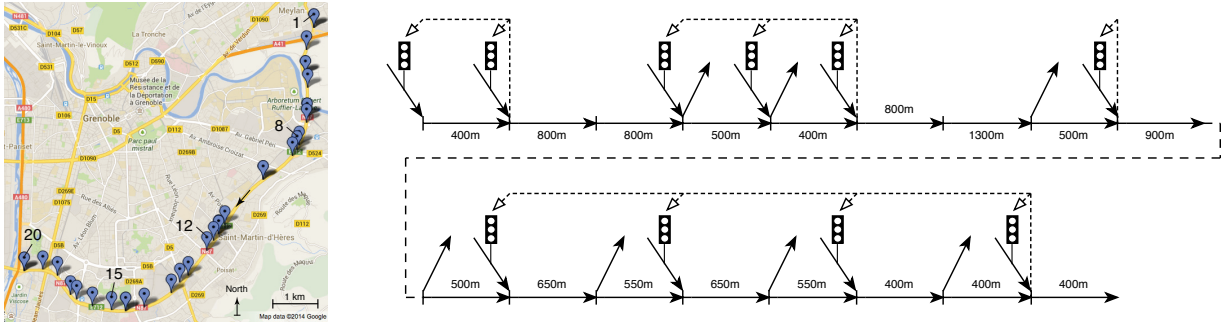}
	\caption{Map of the Grenoble South Link as depicted in~\cite{de2015grenoble} and the corresponding freeway topology. Also shown is a particular coordination pattern between onramps, corresponding to the action profile
	 $\footnotesize [ \text{COR}, \text{LOC}, \text{COR}, \text{COR}, \text{LOC}, \text{LOC}, \text{COR}, \text{COR}, \text{COR}, \text{LOC} ]^{\top}$.}
	\label{fig:grenoble}
\end{figure*}

\begin{example} \label{Ex:RampCoordination}
In freeway traffic control, one seeks to estimate the occupancy of a freeway, usually via loop detectors~\cite{Gibbens2008}, and subsequently adjust speed limits~\cite{Hegyi2005} or traffic lights on the onramps~\cite{papageorgiou2000freeway}, a technique known as ramp metering, to improve traffic flow. However, popular freeway traffic models such as the cell transmission model~\cite{Daganzo1994,daganzo1995cell} or the Metanet model
~\cite{messner1990metanet}, see also~\cite{Payne1971} for an overview of traffic models, are highly nonlinear and methods to design controllers for traffic networks often do not scale well. To reduce the communication and computation burden, distributed controllers that act on mostly local information are required. Model-based approaches for decomposition exist~\cite{Dunbar2006,Oliveira2010}, but in fact, local feedback~\cite{papageorgiou1991alinea} and the combination of local feedback and heuristic, high-level coordination~\cite{papamichail2010heuristic} are among the most popular and practically successful strategies.
To demonstrate its versatility, we will evaluate the efficacy of our method in learning a ramp coordination pattern, for given local controllers.

Consider a freeway with a number of onramps. The idea of ramp metering is to control the traffic inflow from the ramps via traffic lights so as to avoid congestion on the mainline. Both theoretical~\cite{gomes2006optimal} and practical~\cite{papageorgiou2003review} studies have demonstrated that this approach can potentially avoid traffic breakdown in congestion and reduce the sum of travel times of all drivers (TTS, Total Time Spent). An effective metering strategy is to control the inflow such that the local traffic density does not exceed the threshold to congestion, the so-called critical density~\cite{papageorgiou1991alinea}. However, there are limits to this strategy. Multiple ramps, each controlling the local traffic densities, are coupled by the mainline flow as traffic travels downstream and congestion queues can spill back upstream. If no control action of a single ramp is sufficient to prevent congestion of an adjacent bottleneck, then coordination between ramps may hold the answer~\cite{papamichail2010heuristic}.

In this example, we aim to learn a coordination pattern, while the low-level metering policy remains fixed. We consider ten ramps on a freeway located in Grenoble, as presented in~\cite{de2015grenoble} and depicted in Figure~\ref{fig:grenoble}. We allow for ramps either to control only the local traffic density (LOC) or to coordinate with downstream ramps (COR) and control the ramp occupancy, i.e.\ the queue length divided by the ramp length, according to the occupancy of the next downstream ramp. Therefore, the action set for every agent, i.e.,\ every ramp $i$ is $\mathcal{A}_i = \{ \text{LOC}, \text{COR} \}$.
The utility is computed by simulations of the freeway using the modified cell-transmission model as described in~\cite{karafyllis2014global}, which uses a non-monotonic demand function to model the capacity drop empirically observed in a congested freeway. The local utility for agent $i$ is computed as the sum of the total travel time of all cars in the adjacent section of the freeway and the total waiting time in the onramp queue, which is then mapped to the interval $[0,1]$ via a linear transformation. The utilities do not only depend on the action profile but also on the traffic demand, which acts as an external disturbance. We consider real traffic demands during peak hours of the weekdays May 11$^{\text{th}}$ - May 15$^{\text{th}}$, and hence, the disturbance set is $\mathbb{W} = \{ \text{Mon}, \text{Tue}, \text{Wed}, \text{Thu}, \text{Fri} \}$.

We do not try to identify $\rho$ as per~\eqref{Eq:RhoDef} or verify the interdependence property in this example. Instead, we simply choose $\rho$ to be sufficiently large to ensure convergence of the algorithm within a reasonable number of iterations. We ensure that interdependence holds by complimenting the interaction graph with a communication graph, as suggested in~\cite{Menon2013b}. Each agent broadcasts the mood it computes in~\eqref{Eq:StateDynamics}. It then receives all the other agents' moods and performs the following update step to finalize its own mood, as per
\begin{align}
&m_{i,k+1} = \begin{cases}
\mathcal{D} & \tilde{m}_{i,k+1} = \mathcal{D} \\
\mathcal{C} & \tilde{m}_{j,k+1} = \mathcal{C} , \; \forall j \in N \\
\begin{rcases}
\mathcal{C} & \; \text{w.p.} \; \varepsilon^\beta \\
\mathcal{D} & \; \text{w.p.} \; 1-\varepsilon^\beta
\end{rcases}
& \rm{otherwise}
\end{cases} \label{Eq:MoodDynamics}
\end{align}
where $\tilde{m}_{i,k+1}$ is the mood of the $i^{\textrm{th}}$ agent updated locally as per~\eqref{Eq:StateDynamics}. The above update compliments the interaction between the agents by coupling the moods, and is controlled by the parameter $\beta$. If each agent broadcasts its mood to all other agents, this update alone will suffice to ensure the interdependence property, irrespective of the utility functions.
Thus, all the results in this paper, including Theorem~\ref{Thm:ParetoEffLearning}, can be shown to hold for this modified algorithm, for $\rho$ chosen as per~\eqref{Eq:RhoDef}. However, in a real-world setting it might be difficult to compute a suitable bound on $\rho$ beforehand. Instead, we chose $\rho$ empirically to facilitate quick convergence, sacrificing the guarantees that come with Theorem~\ref{Thm:ParetoEffLearning}. The performance is then checked a posteriori.

We simulated $1000$ iterations of the algorithm~\eqref{Eq:AgentDynamics}--\eqref{Eq:StateDynamics},~\eqref{Eq:MoodDynamics}, in Matlab, with $\varepsilon=0.0001$, $c=10$, $\beta=0.00005$ and $\rho=0.6$. The algorithm explored $36$ different action profiles before settling on the ramp coordination schedule 
$\footnotesize \begin{bmatrix} \text{COR}, \text{LOC}, \text{LOC}, \text{COR}, \text{COR}, \text{COR}, \text{LOC}, \text{LOC}, \text{COR}, \text{LOC} \end{bmatrix}^{\top}$. The corresponding baseline utility was $9.6$ and the algorithm spent $890$ out of $1000$ iterations in the above state. The average utility obtained over the entire simulation run was $8.72$, in comparison to an average utility of $6.30$ for the uncontrolled case.
In terms of travel times, this corresponds to savings of $31\%$ over the uncoordinated case. Note that we compute the savings just for the rush-hour period and therefore this value might exceed the savings typically reported for ramp metering field trials, which are usually computed for the entire day~\cite{papageorgiou2000freeway}.
\end{example}

\section{Conclusions} \label{Sec:Conclusions}

We presented a distributed learning algorithm, based on the algorithm in~\cite{Marden2014}, that can be used to learn Pareto-efficient solutions in the presence of disturbances. Our algorithm learns efficient action profiles corresponding to the most favourable disturbance, and specifies a range for the average welfare. In general, the approach outlined in this paper is particularly well suited to problems where the disturbances can be modelled as a finite set of small perturbations from a nominal model. Our examples validated the main result in our paper, and also illustrated the potential of this randomised approach. In many applications, the average welfare is an important performance metric. In future work, we wish to explore randomized approaches that optimize the average welfare obtained.

\bibliographystyle{ieeetr}

\begin{thebibliography}{10}

\bibitem{Marden2014}
J.~R. Marden, H.~P. Young, and L.~Y. Pao, ``Achieving pareto optimality through
  distributed learning,'' {\em SIAM Journal on Control and Optimization},
  vol.~52, no.~5, pp.~2753--2770, 2014.

\bibitem{Allsop2008}
R.~E. Allsop, ``Transport networks and their use: how real can modelling
  get?,'' {\em Phil. Trans. R. Soc. A}, vol.~366, pp.~1879 -- 1892, 2008.

\bibitem{Young2009}
H.~P. Young, ``{Learning by trial and error},'' {\em Games and Economic
  Behavior}, vol.~65, pp.~626--643, March 2009.

\bibitem{Pradelski2012}
B.~S. Pradelski and H.~P. Young, ``Learning efficient nash equilibria in
  distributed systems,'' {\em Games and Economic Behavior}, vol.~75, no.~2,
  pp.~882 -- 897, 2012.

\bibitem{Menon2013a}
A.~Menon and J.~S. Baras, ``Convergence guarantees for a decentralized
  algorithm achieving pareto optimality,'' in {\em Proceedings of the 2013
  American Control Conference}, pp.~1935--1940, 2013.

\bibitem{Menon2013b}
A.~Menon and J.~S. Baras, ``A distributed learning algorithm with bit-valued
  communications for multi-agent welfare optimization,'' in {\em Proceedings of
  the 52nd IEEE Conference on Decision and Control (CDC)}, pp.~2406--2411,
  2013.

\bibitem{Young1993}
H.~P. Young, ``The evolution of conventions,'' {\em Econometrica}, vol.~61,
  no.~1, pp.~pp. 57--84, 1993.

\bibitem{de2015grenoble}
C.~C. de~Wit, F.~Morbidi, L.~L. Ojeda, A.~Y. Kibangou, I.~Bellicot, and
  P.~Bellemain, ``Grenoble traffic lab: An experimental platform for advanced
  traffic monitoring and forecasting,'' {\em Control Systems, IEEE}, vol.~35,
  no.~3, pp.~23--39, 2015.

\bibitem{Gibbens2008}
R.~Gibbens and Y.~Saatci, ``Data, modelling and inference in road traffic
  networks,'' {\em Phil. Trans. R. Soc. A}, vol.~366, pp.~1907--1919, 2008.

\bibitem{Hegyi2005}
A.~Hegyi, B.~D. Schutter, and J.~Hellendoorn, ``Optimal coordination of
  variable speed limits to suppress shock waves,'' {\em IEEE Transactions on
  Intelligent Transportation Systems}, vol.~6, no.~1, pp.~102 -- 112, 2005.

\bibitem{papageorgiou2000freeway}
M.~Papageorgiou and A.~Kotsialos, ``Freeway ramp metering: An overview,'' in
  {\em Intelligent Transportation Systems, 2000. Proceedings. 2000 IEEE},
  pp.~228--239, IEEE, 2000.

\bibitem{Daganzo1994}
C.~F. Daganzo, ``The cell transmission model: A dynamic representation of
  highway traffic consistent with the hydrodynamic theory,'' {\em
  Transportation Research Part B: Methodological}, vol.~28, no.~4,
  pp.~269--287, 1994.

\bibitem{daganzo1995cell}
C.~F. Daganzo, ``The cell transmission model, part ii: network traffic,'' {\em
  Transportation Research Part B: Methodological}, vol.~29, no.~2, pp.~79--93,
  1995.

\bibitem{messner1990metanet}
A.~Messner and M.~Papageorgiou, ``Metanet: A macroscopic simulation program for
  motorway networks,'' {\em Traffic Engineering \&amp; Control}, vol.~31,
  no.~8-9, pp.~466--470, 1990.

\bibitem{Payne1971}
H.~J. Payne, ``Models of freeway traffic and control,'' {\em Simulation Council
  Proc.}, vol.~1, pp.~51 -- 61, 1971.

\bibitem{Dunbar2006}
W.~B. Dunbar and R.~M. Murray, ``Distributed receding horizon control for
  multi-vehicle formation stabilization,'' {\em Automatica}, vol.~42, no.~4,
  pp.~549 -- 558, 2006.

\bibitem{Oliveira2010}
L.~B. de~Oliveira and E.~Camponogara, ``Multi-agent model predictive control of
  signaling split in urban traffic networks,'' {\em Transportation Research
  Part C: Emerging Technologies}, vol.~18, no.~1, pp.~120 -- 139, 2010.

\bibitem{papageorgiou1991alinea}
M.~Papageorgiou, H.~Hadj-Salem, and J.-M. Blosseville, ``Alinea: A local
  feedback control law for on-ramp metering,'' {\em Transportation Research
  Record}, no.~1320, pp.~58--64, 1991.

\bibitem{papamichail2010heuristic}
I.~Papamichail, M.~Papageorgiou, V.~Vong, and J.~Gaffney, ``Heuristic
  ramp-metering coordination strategy implemented at monash freeway,
  australia,'' {\em Transportation Research Record: Journal of the
  Transportation Research Board}, vol.~2178, no.~1, pp.~10--20, 2010.

\bibitem{gomes2006optimal}
G.~Gomes and R.~Horowitz, ``Optimal freeway ramp metering using the asymmetric
  cell transmission model,'' {\em Transportation Research Part C: Emerging
  Technologies}, vol.~14, no.~4, pp.~244--262, 2006.

\bibitem{papageorgiou2003review}
M.~Papageorgiou, C.~Diakaki, V.~Dinopoulou, A.~Kotsialos, and Y.~Wang, ``Review
  of road traffic control strategies,'' {\em Proceedings of the IEEE}, vol.~91,
  no.~12, pp.~2043--2067, 2003.

\bibitem{karafyllis2014global}
I.~Karafyllis, M.~Kontorinaki, and M.~Papageorgiou, ``Global exponential
  stabilization of freeway models,'' {\em arXiv preprint arXiv:1408.5833},
  2014.

\end{thebibliography}

\appendix

\section{Proof of Lemmas} \label{App:ProofLemmas}

\begin{proof}[Proof of Lemma~\ref{Lem:RegPert}]
We first show that Property~i holds under $\Pr^{\varepsilon}$ for $\varepsilon >0$. 
Note that all states are accessible from any state $d \in \mathcal{D}$, from the definition of the state space $\mathbb{Z}$. In other words, there exists an integer $\tau>0$ such that $\Pr(z_{k+\tau}=z' | z_{k}=d) > 0$ for all $z' \in \mathbb{Z}$. Next, note that both content and discontent agents chooses an action with a probability distribution that is fully supported on $\mathbb{A}_{i}$, as per~\eqref{Eq:AgentDynamics}. Due to this, one or more agents can become discontent. Along with the interdependence property, this ensures that all agents can consequently become discontent, and thus, a state such as $d$ is accessible from any other state. In other words, there exists an integer $\tau'>0$ such that $\Pr(z_{k+\tau'}=d | z_{k}=z') > 0$ for all $z' \in \mathbb{Z}$. This proves that the Markov chain is irreducible. Furthermore, many of these states permit a return to the same state with some positive probability, i.e., there exist states $z \in \mathbb{Z}$ such that $\Pr(z_{k+1}=z | z_{k}=z) >0$. These states are aperiodic, which in combination with the irreducibility property effectively renders the Markov chain aperiodic.

By inspection of~\eqref{Eq:AgentDynamics}--\eqref{Eq:StateDynamics}, it is clear that Property~ii is satisfied. We now show that Property~iii holds. Note that the transition probabilities contain terms with exponents of $\varepsilon$ or its complement. The resistance $r=0$ for transition probabilities containing complements of $\varepsilon$, because these transitions occur under $\Pr^{0}$. All other transition probabilities contain negative exponents of $\varepsilon$ resulting in positive resistances, as required by Property~iii.
\end{proof}

\begin{proof}[Proof of Lemma~\ref{Lem:RecurrenceClasses}]
Consider a state $z \in \mathds{D}$. Under $\Pr^0$, each agent picks an action with uniform probability and no utility ever makes an agent content. Thus, any accessible state remains in $\mathds{D}$.

Consider a singleton state in any of the classes $\mathds{C}^{0}$ or $\mathds{C}^{m}$, for $0 < m < n$. Under $\Pr^0$, each agent plays the same action again, and the utilities received by the agents satisfy the interval rule. Thus, the agents remain in the same state.

In the $\mathds{C}^{m}$ states, for $0 \le m < n-1$, when a subset of agents $J \subset N$ choose a new action and become content, there are two circumstances under which the new state is not recurrent. If a value of the disturbance results in a payoff that does not satisfy the interval rule, the corresponding agent(s) become discontent and a new state is reached, which has a mix of content and discontent agents. Under $\Pr^0$, a discontent agent remains discontent. Furthermore, due to the interdependence property, a subset of discontent agents will cause at least one content agent to violate the interval rule and become discontent. This repeats until all agents are discontent, thus reaching $\mathds{D}$. A similar situation occurs if the new action causes the payoff received by any agent in $N \setminus J$ to fall outside the prescribed interval. Thus, in general, no state with a mix of content and discontent agents, is recurrent.
\end{proof}

\begin{proof}[Proof of Lemma~\ref{Lem:StochPotC0}]
First we show that $\gamma(z^{0})$ is less than or equal to the right hand side in~\eqref{Eq:StochPotC0}, and then we show the reverse, thus proving the equality relationship in the lemma.

To show the first path, construct the following tree $T$ rooted at $z^{0}$: Add a directed link $z^{0'} \rightarrow d$ with resistance $c$ between every $z^{0'} \in \mathds{C}^{0} \setminus z^{0}$ and some $d \in \mathds{D}$. Then, add a directed link $z^{m} \rightarrow d$ with resistance $c$ between every $z^{m} \in \mathds{C}^{m}$, for $0 < m < n$, and some $d \in \mathds{D}$. Finally add a directed link $d \rightarrow z^{0}$ from some $d \in \mathds{D}$, with resistance $\sum_{i \in N} (1 - \bar{u}_{i}(z^{0}) )$. The resistance of this tree $\gamma(T) = c (\sum_{m=0}^{n-1} |\mathds{C}^{m}| -1) + \sum_{i \in N} (1 - \bar{u}_{i}(z^{0}) )$, thus establishing that $\gamma(z^{0}) \le \gamma(T)$.

To show the reverse, consider a general tree $T'$ rooted at $z^{0}$. It may differ from $T$ in one or more of the following aspects:
\begin{enumerate}[label={(\alph*)}]
\item It may contain a path of length $q$ between the discontent class and the singleton $z^{0}$, such as $d \rightarrow z^{m_{1}}_{1} \rightarrow \dots z^{m_{q}}_{q} \rightarrow z^{0}$, where $d \in \mathds{D}$, $z^{m_{1}}_{1} \in \mathds{C}^{m_{1}}$, $\dots$, $z^{m_{q}}_{q} \in \mathds{C}^{m_{q}}$.
\item It may contain paths of length $s$ between a singleton from any of the $m^{\textrm{th}}$-content classes and the discontent class, such as $z^{m} \rightarrow z^{m_{1}}_{1} \rightarrow \dots z^{m_{s}}_{s} \rightarrow d$, where $z^{m} \in \mathds{C}^{m}$, $z^{m_{1}}_{1} \in \mathds{C}^{m_{1}}$, $\dots$, $z^{m_{s}}_{s} \in \mathds{C}^{m_{s}}$ and $d \in \mathds{D}$.
\end{enumerate}
From Table~\ref{Tab:Resistances}, we note that the path of length $q$ in case~(a) has a resistance $r(d \rightarrow z^{m_{1}}_{1} \rightarrow \dots z^{m_{q}}_{q} \rightarrow z^{0}) \ge qc + \sum_{i \in N} (1 - \bar{u}_{i}(z^{0}))$. Construct a tree $T_{(a)}$ by replacing this path in $T'$ with a set of links $z^{m_{i}}_{i} \rightarrow d$, for $1 \le i \le q$, and $d \rightarrow z^{0}$. By making these changes, we obtain a total resistance of $qc + \sum_{i \in N} (1 - \bar{u}_{i}(z^{0}))$. 
Thus, we have constructed a tree with $\gamma(T_{(a)}) \le \gamma(T')$.

Next, note that the path in case~(b), $z^{m} \rightarrow z^{m_{1}}_{1} \rightarrow \dots z^{m_{s}}_{s} \rightarrow d$ has a resistance $r \ge (s+1)c$. Construct a tree $T_{(b)}$ by replacing the links in the path with the links $z^{m} \rightarrow d$ and $z^{m_{i}}_{i} \rightarrow d$, for $1 \le i < s$, each of resistance $c$. 
Then, $\gamma(T_{(b)}) \le \gamma(T')$. Thus, we have shown that $\gamma(T^{*}) = \gamma(T) \le \gamma(z^{0})$.
\end{proof}

\begin{proof}[Proof of Lemma~\ref{Lem:NotStochStable}]
Consider a tree rooted at $\mathds{D}$. Then, it must contain a link $z^{0} \rightarrow d$, for some $d \in \mathds{D}$, of resistance $c$. Replace it with the link $d \rightarrow z^{0}$ that incurs a lesser resistance $\sum_{i \in N} (1 - \bar{u}_{i}(z^{0})) \le n < c$. Thus, we have constructed a tree rooted at $z^{0}$ with lower potential.

Similarly, consider a tree rooted at $z^{m} \in \mathds{C}^{m}$, for $0 < m < n$. This tree must contain a path $d \rightarrow z^{0} \rightarrow z^{1} \rightarrow \dots \rightarrow z^{m}$, with resistance $r \ge mc + \sum_{i \in N} (1 - \bar{u}_{i}(z^{0}))$. Replace the links in the path $z^{0} \rightarrow z^{m}$ with the links $z^{l} \rightarrow d'$, for $0 < l \le m$ and $d' \in \mathds{D}$, which results in a resistance of exactly $mc + \sum_{i \in N} (1 - \bar{u}_{i}(z^{0}))$. Thus, we have constructed a tree rooted at $z^{0}$ with lower stochastic potential, and proved our result.
\end{proof}

\section{Calculation of Resistances} \label{App:ResistanceCalc}

In this section, we use $d$, $z^{0}$ and $z^{m}$ to denote a state $d \in \mathds{D}$ and singleton states $z^0 \in \mathds{C}^0$ and $z^m \in \mathds{C}^m$, respectively.

Let us begin with row~$1$ of Table~\ref{Tab:Resistances}. The transition $d \rightarrow z^0$ occurs only when all agents are content with the received payoffs, which happens with probability $\prod_{i \in N} \varepsilon^{1-u_i}$. This gives us the resistance $r_{d z^0}$ in row~$1$. The resistance $r_{d z^m}$ in row~$2$ follows from the definition of a resistance between recurrence classes in~\eqref{Eq:ResistanceRecClass} and the fact that the transition $d \rightarrow z^{m}$ only occurs through a state $z^0$.

For the transition $z^0 \rightarrow d$ to occur, at least one content agent must experiment and become discontent, which happens with probability of order $O(\varepsilon^c)$. Then, all agents become discontent eventually. Thus, $r_{z^0 d} = c$ in row~$3$. The same holds for the transition $z^m \rightarrow d$ in row~$4$.

In row~$5$, the transition $z^0_1 \rightarrow z^0_2$ between the singleton states $z^0_1, z^0_2 \in \mathds{C}^0$ can occur in multiple ways. The transition with the least resistance occurs when an agent experiments with a new action and becomes content with the payoff it receives, while not affecting the payoffs of any other agent. Thus, $r_{z^0_1 z^0_2} \ge c + 1-\bar{u}_i(z^0_2) \ge c$. In general, however, this transition occurs through intermediate states in $\mathds{D}$, resulting in $r_{z^0_1 z^0_2} \le c + \sum_{i \in N} 1 - \bar{u}_i(z^0_2) \le c+n \le 2c$. Transitions through states in $\mathds{C}^m$ are not considered because these incur resistances of greater than $2c$.

A similar argument can be applied to calculate the resistance $r_{z^m_1 z^m_2}$ of a transition between two singleton states $z^m_1, z^m_2 \in \mathds{C}^m$ in row~$6$. When the transition only requires one agent to experiment and be content with a new action, $r_{z^m_1 z^m_2} \ge c + 1-\bar{u}_i(z^m_2) \ge c$. Other transitions occur through intermediate states in $\mathds{D}$, resulting in
\begin{align*}
r_{z^m_1 z^m_2} \le \min_{z^0 \in \mathds{C}^0, z^{1} \in \mathds{C}^{1}, \dots,z^{m-1} \in \mathds{C}^{m-1}} &c + \sum_{i \in N} 1 - \bar{u}_i(z^0) 
+ r_{z^0 z^{1}} + \dots + r_{z^{m-1} z^m_2}
\end{align*}
The worst case least resistance path $z^{0} \rightarrow z^{m}_{2}$ occurs when $n-1$ agents experiment and become content to ensure $z^{0} \rightarrow z^{1}$, $n-2$ agents experiment and become content to ensure $z^{1} \rightarrow z^{2}$ and so on until $n-m$ agents experiment and become content to ensure $z^{m-1} \rightarrow z^{m}_{2}$. This can happen when intermediate states, which are required to ensure that $z^{0} \rightarrow z^{m}_{2}$ occurs with fewer experimenting agents, are not recurrent. Thus, we get
\begin{align*}
r_{z^m_1 z^m_2} \le &\min_{z^0 \in \mathds{C}^0, z^{1} \in \mathds{C}^{1}, \dots,z^{m-1} \in \mathds{C}^{m-1}} & &c + n - \sum_{i \in N} \bar{u}_{i}(z^{0}) \\
& & &+ \underbrace{(n-1)c + \sum_{j_{1} =1}^{n-1} 1-\bar{u}_{j_{1}}(z^{1})}_{r_{z^{0} z^{1}}} + \dots 
+ \underbrace{(n-m)c + \sum_{j_{m} =1}^{n-m} 1-\bar{u}_{j_{m}}(z^{m}_{2})}_{r_{z^{m-1} z^{m}_{2}}} \\
\le & & &c + n + (n-1)c + n-1 + \dots + (n-m)c + n-m
\end{align*}
Using the fact that $c > n$ and summing over the series, we obtain the upperbound in~\eqref{Eq:rmm}. Again, transitions through states in $\mathds{C}^l$, for $l \neq m$, are not considered because these may incur resistances of greater than $\bar{r}_{m}$.

Similar arguments can be used to calculate the resistance in row~$7$ of a transition between two singleton states $z^l \in \mathds{C}^{l}$ and $z^{m} \in \mathds{C}^{m}$, for $0 \le l,m < n$ and $m \neq l$. The least resistant paths require $|m-l|$ agents to experiment and become content, and other transitions occur through an intermediate state in $\mathds{D}$. Transitions through other states in $\mathds{C}^{s}$, for $1 \le s < n$, are not considered as the resistances of such paths can be higher than $\bar{r}_{m}$.

\end{document}